\tikzset{  
  >=stealth',
  plainnode/.style 	= {draw, thick,circle, minimum size=8mm, inner sep=0mm}
}
\newcommand{\texorpdfstring}[2]{#2}
\newtheorem{lemma}{Lemma}
\newtheorem{theorem}{Theorem}
\newcommand{\nats}{\mathbb{N}}
\renewcommand{\epsilon}{\varepsilon}
\renewcommand{\phi}{\varphi}
\newcommand{\pow}[1]{2^{#1}}
\newcommand{\cceq}{\mathop{::=}}
\newcommand{\set}[1]{\{#1\}}
\newcommand{\F}{\mathop{\mathbf{F}\vphantom{a}}\nolimits}
\newcommand{\G}{\mathop{\mathbf{G}\vphantom{a}}\nolimits}
\DeclareMathOperator{\U}{\mathbf{U}}
\newcommand{\X}{\mathop{\mathbf{X}\vphantom{a}}\nolimits}
\newcommand{\ltl}{\mathrm{LTL}\xspace}
\newcommand{\qptl}{\mathrm{QPTL}\xspace}
\newcommand{\ctlstar}{\mathrm{CTL^*}\xspace}
\newcommand{\hyltl}{\mathrm{Hyper\-LTL}\xspace}
\newcommand{\hyqptl}{\mathrm{Hyper\-QPTL}\xspace}
\newcommand{\hyqptlplus}{\mathrm{Hyper\-QPTL^+}\xspace}
\newcommand{\sohyltl}{\mathrm{Hyper^2LTL}\xspace}
\newcommand{\hyctlstar}{\mathrm{HyperCTL^*}\xspace}
\newcommand{\var}{\mathcal{V}}
\newcommand{\fovar}{\mathcal{V}_1}
\newcommand{\sovar}{\mathcal{V}_2}
\newcommand{\ap}[0]{\mathrm{AP}}
\newcommand{\univar}{X_a}
\newcommand{\unidisvar}{X_d}
\newcommand{\tower}{\textsc{Tower}\xspace}
\newcommand{\myquot}[1]{``#1''}
\newcommand{\tsys}{\mathfrak{T}}
\newcommand{\traces}{\mathrm{Tr}}
\newcommand{\inprop}{\texttt{x}}
\newcommand{\contcard}{\mathfrak{c}}
\newcommand{\prop}{\texttt{p}}
\newcommand{\argone}{\texttt{arg1}}
\newcommand{\argtwo}{\texttt{arg2}}
\newcommand{\res}{\texttt{res}}
\newcommand{\add}{\texttt{add}}
\newcommand{\mult}{\texttt{mult}}
\newcommand{\plustimes}{{(+,\cdot)}}
\newcommand{\hyperize}{{\mathit{hyp}}}
\newcommand{\pair}{\mathit{pair}}
\newcommand{\istrace}{\mathit{isTrace}}
\newcommand{\alltraces}{\mathit{all}}
\newcommand{\skolemformat}{\mathit{format}}
\newcommand{\skolemcorrect}{\mathit{correct}}
\newcommand{\natsstruct}{(\nats, +, \cdot, <, \in)}
\newcommand{\propvar}{\texttt{q}}
\newcommand{\marker}{\texttt{m}}
\newcommand{\complete}{\mathit{cmplt}}
\newcommand{\merge}{^\smallfrown}
\newcommand{\enc}{\mathit{enc}}
\newcommand{\all}{\mathit{\!all}}
\newcommand{\consistent}{\mathit{cons}}
\newcommand{\replace}[2]{\mathit{repl}_{#1}(#2)}
\newcommand{\temp}{\mathit{\!tmp}}
\title{The Complexity of HyperQPTL}
\author{
Gaëtan Regaud (ENS Rennes, Rennes, France)\\
Martin Zimmermann (Aalborg University, Aalborg, Denmark)}
\date{}
\begin{document}

\maketitle

\begin{abstract}
HyperQPTL and HyperQPTL$^+$ are expressive specification languages for hyperproperties, properties that relate multiple executions of a system.
Tight complexity bounds are known for HyperQPTL finite-state satisfiability and model-checking.

Here, we settle the complexity of satisfiability for HyperQPTL as well as satisfiability, finite-state satisfiability, and model-checking for HyperQPTL$^+$: the former is $\Sigma^2_1$-complete, the latter are all equivalent to truth in third-order arithmetic, i.e., all four are very undecidable.
\end{abstract}

\section{Introduction}
\label{sec:intro}

Hyperproperties~\cite{ClarksonS10} are properties relating multiple executions of a system and have found applications in security and privacy, epistemic reasoning, and verification. 
Temporal logics have been introduced to express hyperproperties, e.g., $\hyltl$ and $\hyctlstar$~\cite{ClarksonFKMRS14} (which extend $\ltl$ and $\ctlstar$ with trace quantification), $\sohyltl$~\cite{DBLP:conf/cav/BeutnerFFM23} (second-order $\hyltl$, which extends $\hyltl$ with quantification over sets of traces), and many more.
Here, we focus on the most important verification problems:
\begin{itemize}
    \item Satisfiability: Given a sentence~$\phi$, is there a nonempty set~$T$ of traces such that $T \models \phi$?
    \item Finite-state satisfiability: Given a sentence~$\phi$, is there a finite transition system~$\tsys$ such that $\tsys \models \phi$?
    \item Model-checking: Given a sentence~$\phi$ and a finite transition system~$\tsys$, do we have $\tsys \models \phi$?
\end{itemize}

This work is part of a research program~\cite{FinkbeinerRS15,Rabe16diss,FinkbeinerH16,hyperltlsat,DBLP:conf/csl/Frenkel025,fragments,rz25} settling the complexity of these problems for hyperlogics. 
Most importantly for applications in verification, the model-checking problems for $\hyltl$ and $\hyctlstar$ are decidable, albeit $\tower$-complete~\cite{FinkbeinerRS15,Rabe16diss,MZ20}. 
However, satisfiability is typically much harder. 
In fact, the satisfiability problems are typically highly undecidable, i.e., we measure their complexity by placing them in the arithmetic or analytic hierarchy, or even beyond:
Intuitively, first-order arithmetic is predicate logic over the signature~$(+, \cdot, <)$ where quantification ranges over natural numbers.
Similarly, second-order arithmetic adds quantification over sets of natural numbers to first-order arithmetic while third-order arithmetic adds quantification over sets of sets of natural numbers to second-order quantification.
Figure~\ref{fig_hierarchies} depicts the arithmetic, analytic, and \myquot{third} hierarchy, each spanned by the classes of languages definable by restricting the number of alternations of the highest-order quantifiers, e.g., $\Sigma_n^0$ contains languages definable by formulas of first-order arithmetic with $n-1$ quantifier alternations, starting with an existential one.

\begin{figure*}[h]
    \centering
    
  \scalebox{1}{
  \begin{tikzpicture}[xscale=1.0,yscale=.6,thick]

    \fill[fill = gray!25, rounded corners] (-1.05,2) rectangle (0.5,-1.5);
    \fill[fill = gray!25, rounded corners] (.6,2) rectangle (14.5,-1.5);

    \node[align=center] (s00) at (0,0) {$\Sigma^0_0$ \\ $=$ \\ $\Pi^0_0$} ;
    \node (s01) at (1,1) {$\Sigma^0_1$} ;
    \node (p01) at (1,-1) {$\Pi^0_1$} ;
    \node (s02) at (2,1) {$\Sigma^0_2$} ;
    \node (p02) at (2,-1) {$\Pi^0_2$} ;
    \node (s03) at (3,1) {$\Sigma^0_3$} ;
    \node (p03) at (3,-1) {$\Pi^0_3$} ;
    \node (s04) at (4,1) {$\cdots$} ;
    \node (p04) at (4,-1) {$\cdots$} ;
    
    \node[align=center] (s10) at (5,0) {$\Sigma^1_0 $ \\ $=$ \\ $ \Pi^1_0$} ;
    \node (s11) at (6,1) {$\Sigma^1_1$} ;
    \node (p11) at (6,-1) {$\Pi^1_1$} ;
    \node (s12) at (7,1) {$\Sigma^1_2$} ;
    \node (p12) at (7,-1) {$\Pi^1_2$} ;
    \node (s13) at (8,1) {$\Sigma^1_3$} ;
    \node (p13) at (8,-1) {$\Pi^1_3$} ;
    \node (s14) at (9,1) {$\cdots$} ;
    \node (p14) at (9,-1) {$\cdots$} ;
    
    \node[align=center] (s20) at (10,0) {$\Sigma^2_0$ \\ $ =$ \\ $ \Pi^2_0$} ;
    \node (s21) at (11,1) {$\Sigma^2_1$} ;
    \node (p21) at (11,-1) {$\Pi^2_1$} ;
    \node (s22) at (12,1) {$\Sigma^2_2$} ;
    \node (p22) at (12,-1) {$\Pi^2_2$} ;
    \node (s23) at (13,1) {$\Sigma^2_3$} ;
    \node (p23) at (13,-1) {$\Pi^2_3$} ;
    \node (s24) at (14,1) {$\cdots$} ;
    \node (p24) at (14,-1) {$\cdots$} ;
    
    \foreach \i in {0,1,2} {
      \draw (s\i0) -- (s\i1) ;
      \draw (s\i0) -- (p\i1) ;
      \draw (s\i1) -- (s\i2) ;
      \draw (s\i1) -- (p\i2) ;
      \draw (p\i1) -- (s\i2) ;
      \draw (p\i1) -- (p\i2) ;
      \draw (s\i2) -- (s\i3) ;
      \draw (s\i2) -- (p\i3) ;
      \draw (p\i2) -- (s\i3) ;
      \draw (p\i2) -- (p\i3) ;
      \draw (s\i3) -- (s\i4) ;
      \draw (s\i3) -- (p\i4) ;
      \draw (p\i3) -- (s\i4) ;
      \draw (p\i3) -- (p\i4) ;
    }
    \foreach \i [evaluate=\i as \iplus using int(\i+1)] in {0,1} {
      \draw (s\i4) -- (s\iplus0) ;
      \draw (p\i4) -- (s\iplus0) ;
}
      \node[] at (-0.25,1.7) {\scriptsize Decidable} ;
      \node at (13.4,1.7) {\scriptsize Undecidable} ;


      \node[align=left,font=\small,
      rounded corners=2pt] at (3.2,1.7)
      (re) {\scriptsize Recursively enumerable} ;
      \draw[-stealth,rounded corners=2pt] (s01) |- (re) ;

    \path (4.25,-1.75) edge[decorate,decoration={brace,amplitude=3pt}]
    node[below] {\begin{minipage}{4cm}\centering
    	arithmetical hierarchy $\equiv$\\ first-order arithmetic 
    \end{minipage}} (.75,-1.75) ;

    \path (9.25,-1.75) edge[decorate,decoration={brace,amplitude=3pt}]
    node[below] {\begin{minipage}{4.4cm}\centering
    	analytical hierarchy $\equiv$\\ second-order arithmetic 
    \end{minipage}} (5,-1.75) ;

    \path (14.25,-1.75) edge[decorate,decoration={brace,amplitude=3pt}]
    node[below] {\begin{minipage}{4cm}\centering
    	\myquot{the third hierarchy}  {$\equiv$}\\ third-order arithmetic 
    \end{minipage}} (10,-1.75) ;

  \end{tikzpicture}}
    
    \caption{The arithmetical hierarchy, the analytical hierarchy, and beyond.}
    \label{fig_hierarchies}
\end{figure*}
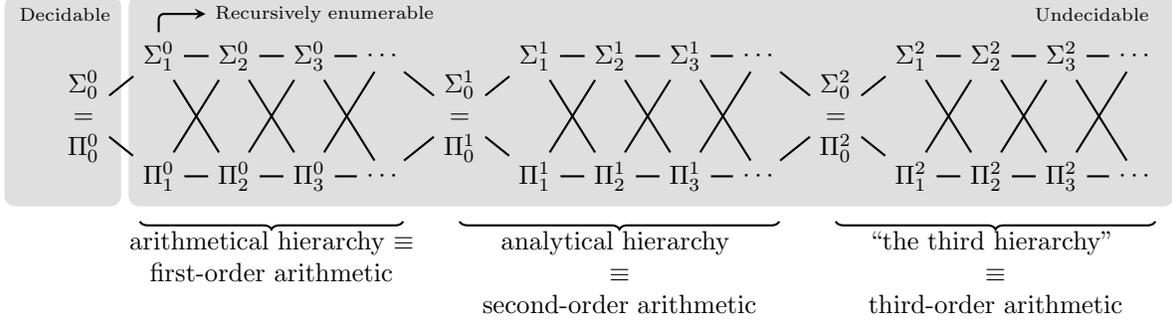

Here, we study the logics~$\hyqptl$ and $\hyqptlplus$ which extend $\hyltl$ by quantification over propositions~\cite{Rabe16diss,FinkbeinerHHT20}, just as $\qptl$ extends $\ltl$ with quantification over propositions~\cite{qptl}. $\qptl$ is, unlike $\ltl$, able to express all $\omega$-regular properties while $\hyqptl$ is, unlike $\hyltl$, able to express, e.g., promptness and epistemic properties~\cite{FinkbeinerHHT20}.

The difference between $\hyqptl$ and $\hyqptlplus$ manifests itself in the semantics of propositional quantification.
Recall that hyperlogics are evaluated over sets~$T$ of traces.
In $\hyqptl$, the quantification of a proposition~$\prop$ reassigns the truth value of $\prop$ in $T$ in a uniform way over all traces, i.e., after quantifying $\prop$ all traces coincide on their truth values for $\prop$.
In $\hyqptlplus$ on the other hand, the quantification of a proposition~$\prop$ reassigns truth values of $\prop$ to traces in $T$ individually.
Said differently, in $\hyqptl$ one quantifies over a single sequence of truth values (i.e., a sequence in $\set{0,1}^\omega$) while in $\hyqptlplus$ one quantifies over a set of sequences of truth values (i.e., a subset of $\set{0,1}^\omega$).
Hence, one expects $\hyqptlplus$ to be more expressive than $\hyqptl$.

And indeed, Finkbeiner et al.\ showed that $\hyqptlplus$ model-checking is undecidable~\cite{FinkbeinerHHT20}, while it is $\tower$-complete for $\hyqptl$~\cite{Rabe16diss}.
It is also known that $\hyqptl$ finite-state satisfiability is $\Sigma^0_1$-complete~\cite{Rabe16diss}, i.e., complete for the recursively-enumerable languages, but the exact complexity of $\hyqptl$ satisfiability is open: it is only known to be $\Sigma^1_1$-hard, as satisfiability is already $\Sigma_1^1$-complete for its fragment $\hyltl$~\cite{FinkbeinerH16,hyperltlsat}.
For $\hyqptlplus$, much less is known: As mentioned above, model-checking $\hyqptlplus$ is undecidable~\cite{FinkbeinerHHT20}, but its exact complexity is open, as is the complexity of satisfiability and finite-state satisfiability.

Here, we settle the complexity of all four open problems by showing that $\hyqptl$ satisfiability is $\Sigma^2_1$-complete (Section~\ref{sec_hyqptl}) while all three problems for $\hyqptlplus$ are equivalent to truth in third-order arithmetic (Section~\ref{sec_hyqptlplus}).
These latter results are obtained by showing that $\hyqptlplus$ and $\sohyltl$ have the same expressiveness.
This confirms the expectation that $\hyqptlplus$ is more expressive than $\hyqptl$: the non-uniform quantification of propositions allows to simulate quantification over arbitrary sets of traces.

\section{Preliminaries}
\label{sec:prels}

We denote the nonnegative integers by $\nats$. 
An alphabet is a nonempty finite set. 
The set of infinite words over an alphabet~$\Sigma$ is denoted by $\Sigma^\omega$.
Let $\ap$ be a nonempty finite set of atomic propositions. 
A trace over $\ap$ is an infinite word over the alphabet~$\pow{\ap}$.
Given a subset~$\ap' \subseteq \ap$, the $\ap'$-projection of a trace~$t(0)t(1)t(2) \cdots$ over $\ap$ is the trace~$(t(0) \cap \ap')(t(1) \cap \ap')(t(2) \cap \ap') \cdots \in (\pow{\ap'})^\omega$.
The $\ap'$-projection of $T \subseteq (\pow{\ap})^\omega$ is defined as the set of $\ap'$-projections of traces in $T$.
We write $t_0 =_{\ap'} t_1$ ($T_0 =_{\ap'} T_1$) if the $\ap'$-projections of $t_0$ and $t_1$ ($T_0$ and $T_1$) are equal.
Now, let $\ap$ and $\ap'$ be two disjoint sets, let $t$ be a trace over~$\ap$, and let $t'$ be a trace over $\ap'$. 
Then, we define $t \merge t'$ as the pointwise union of $t$ and $t'$, i.e., $t \merge t'$ is the trace over $\ap \cup \ap'$ defined as $(t(0) \cup t'(0))(t(1) \cup t'(1))(t(2) \cup t'(2))\cdots$.

A transition system~$\tsys = (V,E,I, \lambda)$ consists of a finite nonempty set~$V$ of vertices, a set~$E \subseteq V \times V$ of (directed) edges, a nonempty set~$I \subseteq V$ of initial vertices, and a labeling~$\lambda\colon V \rightarrow \pow{\ap}$ of the vertices by sets of atomic propositions.
We assume that every vertex has at least one outgoing edge.
A path~$\rho$ through~$\tsys$ is an infinite sequence~$\rho(0)\rho(1)\rho(2)\cdots$ of vertices with $\rho(0) \in I$ and $(\rho(n),\rho(n+1))\in E$ for every $n \ge 0$.
The trace of $\rho$ is defined as $ \lambda(\rho ) = \lambda(\rho(0))\lambda(\rho(1))\lambda(\rho(2))\cdots$.
The set of traces of $\tsys$ is $\traces(\tsys) = \set{\lambda(\rho) \mid \rho \text{ is a path through $\tsys$}}$.
Note that this set is always nonempty, as $I$ is nonempty and as there are no terminal vertices.

To capture the complexity of undecidable problems, we consider formulas of arithmetic, i.e., predicate logic with signature~$(+, \cdot, <, \in)$, evaluated over the structure~$\natsstruct$ (see, e.g.,~\cite{Rogers87}). 
A type~$0$ object is a natural number in $\nats$, a type~$1$ object is a subset of $\nats$, and a type~$2$ object is a set of subsets of $\nats$.

First-order arithmetic allows to quantify over type~$0$ objects, second-order arithmetic allows to quantify over type~$0$ and type~$1$ objects, and third-order arithmetic allows to quantify over type~$0$, type~$1$, and type~$2$ objects.
Note that every fixed natural number is definable in first-order arithmetic, so we freely use them as syntactic sugar. Similarly, equality can be expressed using~$<$, so we use it as syntactic sugar as well.

Truth in third-order arithmetic is the following problem: given a sentence~$\phi$ of third-order arithmetic, does $\natsstruct $ satisfy $\phi$ (written $\natsstruct\models\phi$ as usual)?
Furthermore, $\Sigma^2_1$ contains the sets of the form~$
    \set{n \in\nats \mid \natsstruct \models \phi(n)},
    $
    where $\phi$ is a formula of the form~$\exists \mathcal{Y}_1\ldots \exists \mathcal{Y}_k.\ \psi(x, \mathcal{Y}_1, \ldots, \mathcal{Y}_k,)$ where the $\mathcal{Y}_j$ are third-order variables and $\psi$ is a formula of arithmetic containing only first- and second-order quantifiers.

\section{\texorpdfstring{$\hyqptl$}{HyperQPTL}}
\label{sec_hyqptl}

In this section, we introduce $\hyqptl$ and then settle the complexity of its satisfiability problem.

\subsection{Syntax and Semantics}
\label{subsec_hyperqptl_syse}
Let $\var$ be a countable set of trace variables. 
The formulas of $\hyqptl$~\cite{Rabe16diss,FinkbeinerHHT20} are given by the grammar
\begin{align*}
    \phi &{} \cceq {} \exists \pi.\ \phi \mid \forall \pi.\ \phi \mid \exists \propvar.\ \phi \mid \forall \propvar.\ \phi \mid \psi \\
    \psi&{} \cceq{} \prop_\pi \mid \propvar \mid \lnot \psi \mid \psi \lor \psi \mid \X \psi \mid \F \psi
\end{align*}
where $\prop$ and $\propvar$ range over $\ap$ and where $\pi$ ranges over $\var$.
Note that there are two types of atomic formulas, i.e., propositions labeled by traces on which they are evaluated ($\prop_\pi$ with $\prop \in \ap$ and $\pi \in \var$) and unlabeled propositions ($\propvar \in \ap$). We use different letters for the propositions in these cases, but let us stress again that both $\prop$ and $\propvar$ are propositions in $\ap$.
A formula is a sentence, if every occurrence of an atomic formula~$\prop_\pi$ is in the scope of a quantifier binding~$\pi$ and every occurrence of an atomic formula~$\propvar$ is in the scope of a quantifier binding $\propvar$.
Finally, note that the only temporal operators we have in the syntax are next~($\X$) and eventually~($\F$), as the other temporal operators like always~($\G$) and until~$(\U)$ are syntactic sugar in $\hyqptl$~\cite{kaivolaphd}. 
Hence, we use them freely in the following, just as conjunction, implication, and equivalence.

A trace assignment is a partial mapping~$\Pi\colon \var \rightarrow (\pow{\ap})^\omega$.
Given a variable~$\pi \in \var$, and a trace~$t$, we denote by $\Pi[\pi \mapsto t]$ the trace assignment that coincides with $\Pi$ on all variables but $\pi$, which is mapped to $t$. 
Let $\propvar \in \ap$, let $t \in (\pow{\ap})^\omega$ be a trace over $\ap$, and let $t_\propvar \in (\pow{\set{\propvar}})^\omega$ be a trace over~$\set{\propvar}$.
We define the trace~$t[\propvar\mapsto t_\propvar] = t'{} \merge t_\propvar$, where $t'$ is the $(\ap\setminus\set{\propvar})$-projection of $t$: Intuitively, the occurrences of $\propvar$ in $t$ are replaced according to $t_\propvar$.
We lift this to sets~$T$ of traces by defining $T[\propvar\mapsto t_\propvar] = \set{t[\propvar\mapsto t_\propvar] \mid t \in T}$. 
All traces in $T[\propvar\mapsto t_\propvar]$ have the same $\set{\propvar}$-projection, i.e.,  $t_\propvar$.

Now, for a trace assignment~$\Pi$, a position~$i \in\nats$, and a set~$T$ of traces, we define
\begin{itemize}
    
    \item $T, \Pi, i \models \prop_\pi $ if  $\prop\in\Pi(\pi)(i)$, 
    
    \item $T, \Pi, i \models \propvar $ if for all $ t \in T$ we have $\propvar\in t(i)$,
    
    \item $T, \Pi, i \models \lnot \psi $ if  $T, \Pi, i \not\models \psi$, 
    
    \item $T, \Pi, i \models \psi_1 \lor \psi_2 $ if  $T, \Pi, i \models \psi_1$ or $T, \Pi, i \models \psi_2$, 
    
    \item $T, \Pi, i \models \X \psi $ if  $T, \Pi, i+1 \models \psi$, 
    
    \item $T, \Pi, i \models \F \psi $ if  there is an $i' \ge i$ with $T, \Pi, i' \models \psi$, 
    
    \item $T, \Pi, i \models \exists \pi.\, \phi $ if  there exists a $t \in T$ such that $T, \Pi[\pi\mapsto t], i \models \phi$, 
    
    \item $T, \Pi, i \models \forall \pi.\ \phi $ if $T, \Pi[\pi\mapsto t], i \models \phi$ for all $t \in T$, 
    
    \item $T, \Pi, i \models \exists \propvar.\ \phi $ if  there exists a $t_\propvar \in (\pow{\set{\propvar}})^\omega$ such that $T[\propvar\mapsto t_\propvar], \Pi, i \models \phi$, and  

    \item $T, \Pi, i \models \forall \propvar.\ \phi $ if $T[\propvar\mapsto t_\propvar], \Pi, i \models \phi$ for all $t_\propvar \in (\pow{\{\propvar\}})^\omega$. 
\end{itemize}

We say that a nonempty set~$T$ of traces satisfies a sentence~$\phi$, written $T \models\phi$, if $T, \Pi_\emptyset, 0 \models \phi$ where $\Pi_\emptyset$ is the trace assignment with empty domain. 
We then also say that $T$ is a model of $\phi$.
A transition system~$\tsys$ satisfies $\phi$, written $\tsys\models\phi$, if $\traces(\tsys) \models \phi$.

While it is known that $\hyqptl$ finite-state satisfiability is $\Sigma^0_1$-complete~\cite{Rabe16diss}, i.e., complete for the recursively-enumerable languages, and $\hyqptl$ model-checking is $\tower$-complete~\cite{Rabe16diss}, the exact complexity of $\hyqptl$ satisfiability is open: it is only known to be $\Sigma_1^1$-hard, as satisfiability is $\Sigma_1^1$-complete for its fragment $\hyltl$~\cite{hyperltlsat}.

\subsection{HyperQPTL Satisfiability}

In this subsection, we study the $\hyqptl$ satisfiability problem, which asks whether, for a given $\hyqptl$ sentence~$\phi$, there is a nonempty~$T$ such that $T \models \phi$.
We show that it is $\Sigma^2_1$-complete: intuitively, quantification over traces is equivalent to quantification over subsets of $\nats$: a trace~$t$ encodes the set~$\set{n \in\nats\mid \prop \in t(n)}$ and vice versa, where $\prop$ is some designated proposition.
Furthermore, existential quantification over sets of sets of natural numbers is equivalent to the selection of a set of traces at the meta-level of the satisfiability problem. 

For this to be correct, we need to enforce that the model of a $\hyqptl$ sentence contains enough traces to encode all subsets of $\nats$. 
As usual, we use $\contcard$ to denote the cardinality of the continuum, or equivalently, the cardinality of $(\pow{\ap})^\omega$ for each finite $\ap$ and the cardinality of $\pow{\nats}$.
As models of $\hyqptl$ sentences are sets of traces, $\contcard$ is a trivial upper bound on the size of models.
It is straightforward to show that this upper bound is tight, which also implies that there are sentences whose models allow us to encode all subsets of $\nats$.

\begin{theorem}
There is a satisfiable $\hyqptl$ sentence that has only models of cardinality~$\contcard$.
\end{theorem}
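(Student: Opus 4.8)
The plan is to construct an explicit $\hyqptl$ sentence $\phi$ such that every model $T$ of $\phi$ must have cardinality exactly $\contcard$. Since $\contcard$ is a trivial upper bound on the size of any model (models are sets of traces over a finite $\ap$, and $|(\pow{\ap})^\omega| = \contcard$), it suffices to force every model to contain at least $\contcard$-many distinct traces. The natural strategy is to pick a single distinguished proposition, say $\prop \in \ap$, and write a sentence asserting that for every possible $\set{\prop}$-trace $s \in (\pow{\set{\prop}})^\omega$, there is a trace $t \in T$ whose $\set{\prop}$-projection equals $s$. Since there are $\contcard$-many such projections $s$ and distinct projections force distinct traces, this would immediately yield $|T| \ge \contcard$.

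First I would express the requirement ``$T$ contains a trace realizing every $\set{\prop}$-pattern'' using propositional quantification. The key insight is that universal quantification over a proposition $\propvar$ ranges over all traces $t_\propvar \in (\pow{\set{\propvar}})^\omega$, which is precisely a quantification over all $\contcard$-many binary sequences. So I would use a quantifier $\forall \propvar$ to range over all target patterns, and then assert existence of a matching trace via $\exists \pi$. Concretely, I would take something like
\begin{equation*}
\phi = \forall \propvar.\ \exists \pi.\ \G\,(\prop_\pi \leftrightarrow \propvar),
\end{equation*}
which reads: for every choice of truth-value sequence for $\propvar$, there is a trace $\pi$ in $T$ whose $\prop$-labeling agrees with $\propvar$ at every position. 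The $\G$ (definable from $\F$ and negation) enforces agreement at all positions, and $\leftrightarrow$ is the usual syntactic sugar. The subtlety I must handle carefully is the semantics of $\forall \propvar$: quantifying $\propvar$ replaces the $\set{\propvar}$-projection of all traces in $T$ uniformly by $t_\propvar$, so $\propvar$ at position $i$ evaluates to $t_\propvar(i)$ uniformly, while $\prop_\pi$ still reads the original $\prop$-value of the chosen trace. I need $\prop$ and $\propvar$ to be distinct propositions so that reassigning $\propvar$ does not disturb the $\prop$-labeling of traces; this is exactly the setup the preliminaries allow, since $\prop$ and $\propvar$ are different atomic propositions in $\ap$.

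The correctness argument then splits into two directions. For the cardinality lower bound: suppose $T \models \phi$. For each $s \in (\pow{\set{\prop}})^\omega$, instantiate $\propvar$ with the corresponding pattern; the formula guarantees a trace $t \in T$ with $\prop \in t(i) \iff \prop \in s(i)$ for all $i$, i.e.\ with $\set{\prop}$-projection $s$. Distinct $s$ give distinct traces, and there are $\contcard$ patterns, so $|T| \ge \contcard$; combined with the trivial upper bound, $|T| = \contcard$. For satisfiability, I must exhibit at least one model, and the cleanest choice is $T = (\pow{\ap})^\omega$ (the full set of traces), which obviously contains a trace of every $\prop$-projection and hence satisfies $\phi$; this also confirms $\phi$ is a genuine sentence with a model.

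I expect the main obstacle to be bookkeeping around the projection semantics of propositional quantification rather than any deep difficulty. Specifically, I must verify that after substituting $T[\propvar \mapsto t_\propvar]$, the atom $\prop_\pi$ still refers to the unmodified $\prop$-component of the chosen trace (true, since $\prop \ne \propvar$ means the $\prop$-projection is untouched by the $\propvar$-reassignment), while $\propvar$ refers uniformly to $t_\propvar$. Once this is pinned down, the chain of equivalences from $T, \Pi_\emptyset, 0 \models \phi$ down to the pattern-realization statement is routine. A minor point worth stating explicitly is that the choice of trace $t$ witnessing $\exists \pi$ may depend on $t_\propvar$, which is fine and in fact necessary; the nonemptiness requirement on models is also automatically satisfied since any model must contain $\contcard$-many traces.
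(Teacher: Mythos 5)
Your proposal is correct and matches the paper's own proof essentially verbatim: the paper uses the sentence $\forall \propvar.\ \exists \pi.\ \G(\propvar \leftrightarrow \inprop_\pi)$ over $\ap = \set{\inprop, \propvar}$, which is your formula up to renaming, and argues exactly as you do that every model must realize all $\contcard$-many $\set{\inprop}$-projections. Your additional care about the semantics of $T[\propvar \mapsto t_\propvar]$ and the explicit witness model $(\pow{\ap})^\omega$ only spells out details the paper leaves implicit.
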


\begin{proof}
    Fix $\ap = \set{\inprop,\propvar}$ and let $\theta_\all= \forall \propvar.\ \exists \pi.\ \G(\propvar\leftrightarrow \inprop_\pi)$, which requires that the $\set{\inprop}$-projection of any model of $\phi_\all$ is equal to $(\pow{\set{\inprop}})^\omega$.
    As $(\pow{\set{\inprop}})^\omega$ has cardinality~$\contcard$, every model of $\theta_\all$ has cardinality~$\contcard$.
\end{proof}

This, and the fact that addition and multiplication can be \myquot{implemented} in $\hyltl$~\cite{hyperltlsat} suffice to prove our lower bound on $\hyqptl$ satisfiability.

\begin{lemma}
\label{lemma_satcomplexity_hyqptl_lb}
$\hyqptl$ satisfiability is $\Sigma^2_1$-hard.
\end{lemma}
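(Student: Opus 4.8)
The plan is to reduce from truth in third-order arithmetic restricted to the $\Sigma^2_1$ prefix, i.e.\ from the canonical $\Sigma^2_1$-complete problem of deciding, for a formula $\exists \calY_1 \cdots \exists \calY_k.\ \psi$ with $\psi$ containing only first- and second-order quantifiers, whether $\natsstruct \models \exists \calY_1 \cdots \exists \calY_k.\ \psi$. The central observation, already flagged in the subsection's preamble, is the dictionary between arithmetic objects and $\hyqptl$ objects: a type-$1$ object (a subset of $\nats$) corresponds to a trace over a single proposition via $\set{n \mid \propvar \in t(n)}$, and a type-$2$ object (a set of subsets of $\nats$) corresponds to a \emph{set of traces} — which at the satisfiability meta-level is exactly the model $T$ we get to choose. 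So the existential type-$2$ quantifiers $\exists \calY_j$ are absorbed into the choice of the model $T$, the second-order (type-$1$) quantifiers of $\psi$ become $\hyqptl$ propositional quantifiers $\exists \propvar / \forall \propvar$, and the first-order (type-$0$) quantifiers become quantification over \emph{positions}, which in a linear-time hyperlogic is encoded by marking a single distinguished position on a trace and quantifying that trace.

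First I would fix the encoding of the arithmetic universe inside the model. Using the sentence $\theta_\all$ from the preceding theorem, I can force the model $T$ to contain, in its $\set{\inprop}$-projection, every trace in $(\pow{\set{\inprop}})^\omega$, hence to encode \emph{all} subsets of $\nats$; a first-order quantifier $\exists x$ over naturals is then simulated by existentially picking a trace whose $\inprop$-projection is a singleton $\set{x}$ (a "pointer" trace), which I can constrain with a guard formula asserting exactly one $\inprop$ occurs. Second-order quantification $\exists \calY$ over a subset of $\nats$ is handled by the propositional quantifier $\exists \propvar.$, reading off the set $\set{n \mid \propvar \in \cdot(n)}$ from the (uniform) truth values of $\propvar$. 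To evaluate the arithmetic matrix $\psi$ I would import the fact, cited from \cite{hyperltlsat}, that addition and multiplication are implementable in $\hyltl$: the atomic predicates $x + y = z$, $x \cdot y = z$, $x < y$, and $x \in \calY$ translate into $\hyqptl$ subformulas comparing the marked positions of pointer traces and testing membership via the propositional value at that position.

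The layering of quantifiers is the delicate bookkeeping step. The outer $\exists \calY_1 \cdots \exists \calY_k$ must become part of the model rather than explicit quantifiers, so I would encode each $\calY_j$ as a distinct proposition and have the model $T$ carry, across its traces, an arbitrary family of truth-value sequences for these propositions — this is precisely where the non-uniform content of a set of traces buys the type-$2$ existential. I must then verify that the chosen $T$ both satisfies $\theta_\all$ (to guarantee the first-order universe is complete) \emph{and} simultaneously codes the witnesses $\calY_j$, so I would take the conjunction of $\theta_\all$ with a translation $\hat\psi$ of $\psi$, arranging the proposition alphabets so that the $\inprop$-completeness requirement and the $\calY_j$-coding requirement do not interfere. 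Correctness amounts to a routine induction on the structure of $\psi$ showing $T$-satisfaction of $\hat\psi$ mirrors $\natsstruct$-satisfaction under the decoding, with the base cases handed to the arithmetic-implementation lemma.

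The main obstacle I anticipate is the faithful simulation of first-order (position) quantification, and its interaction with the uniformity of $\hyqptl$ propositional quantification. Because $\exists \propvar$ in $\hyqptl$ reassigns $\propvar$ \emph{uniformly} across all traces, I must take care that a "pointer" trace marking position $x$ is genuinely present in the model and is unambiguously identified (exactly one marked position, and the right $\inprop$-value elsewhere), and that alternating first-order quantifiers do not collapse distinct positions; this requires guard formulas enforcing well-formedness of pointer traces together with a $\forall \pi$ over the completed $\inprop$-universe. I would also need to check that nothing in $\hat\psi$ accidentally lets the simulated first-order variables range over "junk" traces that are in $T$ for coding purposes but are not legitimate position pointers — isolating the coding sub-alphabet from the pointer sub-alphabet is the crux. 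Once these well-formedness guards are in place, the equivalence $\natsstruct \models \exists \calY_1 \cdots \exists \calY_k.\ \psi$ iff $\theta_\all \wedge \hat\psi$ is satisfiable follows, establishing $\Sigma^2_1$-hardness.
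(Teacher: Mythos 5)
Your overall architecture matches the paper's: reduce from $\Sigma^2_1$ truth, use $\theta_\all$ to force every subset of $\nats$ to appear as the $\set{\inprop}$-projection of some model trace, encode first-order variables as singleton \myquot{pointer} traces protected by a well-formedness guard, import the $\hyltl$ implementation of addition and multiplication from~\cite{hyperltlsat}, and absorb the existential third-order quantifiers into the choice of the model by labeling its traces with extra propositions. Your one real deviation is the treatment of the second-order quantifiers of $\psi$: you simulate them by $\hyqptl$ propositional quantification $\exists \propvar_Y / \forall \propvar_Y$ (which indeed ranges exactly over $\pow{\nats}$), whereas the paper simulates them by trace quantification $\exists \pi_Y / \forall \pi_Y$ over the model, relying on $\theta_\all$ to guarantee that every set has a representative trace.

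That deviation, however, opens a gap your proposal never closes: you give no translation of the atomic predicate $Y \in \calY_j$, i.e., membership of a type-$1$ object in a type-$2$ object, which is the only place where the formula reads the third-order witnesses and is therefore the crux of the whole reduction (your list of atoms covers $+$, $\cdot$, $<$, and number-in-set membership only). In the paper this atom becomes $(\marker_j)_{\pi_Y}$: since $Y$ is bound to a trace of the model, its marker can be read off directly. With your encoding, $Y$ exists only as a uniform sequence of truth values of $\propvar_Y$, which carries no marker; to test membership you must re-enter the model, e.g., translate $Y \in \calY_j$ as $\exists \pi.\ \G(\propvar_Y \leftrightarrow \inprop_\pi) \wedge (\marker_j)_\pi$, using $\theta_\all$ to guarantee that a matching trace exists. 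At that point you also hit the consistency problem that the paper dedicates $\theta_\consistent$ to and that you never mention: many distinct model traces have the same $\set{\inprop}$-projection (they differ on the arithmetic and marker propositions), so either the marking must be forced to agree across all traces encoding the same set, or you must commit to a fixed (say, existential) reading of the matching and verify that the correctness induction still goes through for negated atoms and under universal second-order quantifiers. Until this atom and its consistency analysis are filled in, the reduction is incomplete; the rest of your construction is in order.
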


\begin{proof}
We need to show that every language in $\Sigma^2_1$ is polynomial-time reducible to $\hyqptl$ satisfiability. To this end, we first present a polynomial-time computable function~$f$ mapping sentences of the form~$\phi = \exists \mathcal{Y}_1\ldots \exists \mathcal{Y}_k.\ \psi $, with third-order variables~$\mathcal{Y}_j$ and with $\psi$ only containing first-order and second-order quantifiers, to $\hyqptl$ sentences~$f(\phi)$ such that $\natsstruct \models \phi$ if and only if $f(\phi)$ is satisfiable.
To this end, we fix $\ap = \set{\inprop,\propvar, \marker_1, \ldots, \marker_k, \argone, \argtwo, \res, \add, \mult}$. We explain the roles of the different propositions along the way. 

Recall that the sentence~$\theta_\all$ ensures that the $\set{\inprop}$-projection of each of its models is equal to $(\pow{\set{\inprop}})^\omega$. 
Hence, by ignoring all propositions but $\inprop$, we can use traces to encode sets of natural numbers and natural numbers (as singleton sets) and each model contains the encoding of each set of natural numbers.
In our encoding, a trace bound to a trace variable~$\pi$ encodes a singleton set if and only if the formula~$(\neg \inprop_{\pi})\U(\inprop_{\pi} \wedge \X\G\neg \inprop_{\pi}) $ is satisfied.

Thus, a set of sets of natural numbers can be encoded by marking traces (encoding sets) by a marker signifying whether it is in the set (the marker holds at the first position) or not (the marker does not hold at the first position). 
Note that several traces encode the same set, hence we need to ensure that the marking is consistent among these traces. 
We use a distinct marker~$\marker_j$ for each $\mathcal{Y}_j$ appearing in $\phi$ and the formula 
\begin{multline*}
\theta_\consistent = \bigwedge\nolimits_{j=1}^k \forall\pi.\ \X\G\neg(\marker_j)_\pi \wedge \forall\pi'.\ \Big(
\big(\G(\inprop_{\pi} \leftrightarrow \inprop_{\pi'})
\big) 
\rightarrow 
\big((\marker_j)_{\pi} \leftrightarrow (\marker_j)_{\pi'} \big)
\Big)    
\end{multline*}
to express consistency: the first line expresses that only the first position of a trace may (but must not) be labeled with markers and the second line expresses that when two traces have the same truth value of $\inprop$ everywhere (i.e., they encode the same set), then they have the same marking. 

So, every model~$T$ of $\theta_\all \wedge \theta_\consistent$ encodes every subset of $\nats$ as a trace in $T$ and also $k$ sets of subsets of $\nats$ via the (consistent) marking of traces, i.e., the existential quantification over a model in the satisfiability problem simulates the existential third-order quantification of the $\mathcal{Y}_j$.

Finally, Fortin et al.\ showed that addition and multiplication can be \myquot{implemented} in $\hyltl$~\cite{hyperltlsat}: 
Let $T_\plustimes$ be the set of all traces $t \in (\pow{\ap})^\omega$ such that there are unique $n_1, n_2, n_3 \in \nats$  with $\argone \in t(n_1)$, $\argtwo \in t(n_2)$, and $\res \in t(n_3)$, and either 
    \begin{itemize}
        \item $\add \in t(n)$ and $\mult \notin t(n)$ for all $n$, and $n_1+n_2 = n_3$, or
        \item $\mult \in t(n)$ and $\add \notin t(n)$ for all $n$, and $n_1 \cdot n_2 = n_3$.
    \end{itemize}
There is a satisfiable $\hyltl$ sentence $\theta_\plustimes$ such that the $\set{\argone, \argtwo, \res, \add, \mult}$-projection of every model of $\theta_\plustimes$ is $T_\plustimes$~\cite[Theorem 5.5]{hyperltlsat}.
As $\hyltl$ is a fragment of $\hyqptl$, we can use $\theta_\plustimes$ to construct our desired formula.

Now, we define for $\phi = \exists \mathcal{Y}_1\ldots \exists \mathcal{Y}_k.\ \psi $
\[
f(\phi)= \theta_\all \wedge \theta_\consistent \wedge \theta_\plustimes \wedge \hyperize(\psi) 
\]
where $\hyperize(\psi)$ is defined inductively as follows:
\begin{itemize}

    \item For second-order variables~$Y$, $\hyperize(\exists Y.\ \psi) = \exists \pi_Y.\ \hyperize(\psi) $ and $\hyperize(\forall Y.\ \psi) = \forall \pi_Y.\ \hyperize(\psi) $.
    
    \item For first-order variables~$y$,
    $\hyperize(\exists y.\ \psi) = \exists \pi_y.\  ((\neg \inprop_{\pi_y})\U(\inprop_{\pi_y} \wedge \X\G\neg \inprop_{\pi_y}))  \land \hyperize(\psi) $ and 
    $\hyperize(\forall y.\ \psi) = \forall \pi_y.\ ((\neg \inprop_{\pi_y})\U(\inprop_{\pi_y} \wedge \X\G\neg \inprop_{\pi_y}))  \rightarrow \hyperize(\psi) $,
    
    \item $\hyperize(\psi_1 \lor \phi_2) = \hyperize(\psi_1) \lor \hyperize(\psi_2)$,
    
    \item $\hyperize(\lnot \psi) = \lnot \hyperize(\psi) $,
    
    \item For third-order variables~$\mathcal{Y}_j$ and second-order variables~$Y$,  
    $\hyperize(Y\in \mathcal{Y}_j) = (\marker_j)_{\pi_Y}$,

    \item For second-order variables~$Y$ and first-order variables~$y$,  
    $\hyperize(y\in Y) = \F(\inprop_{\pi_y} \land \inprop_{\pi_Y})$,
    
    \item For first-order variables~$y,y'$, $\hyperize(y<y') =$\newline$ \F(\inprop_{\pi_y} \land \X\F\inprop_{\pi_{y'}})$,
    
    \item For first-order variables~$y_1,y_2,y$, $\hyperize(y_1+y_2=y) = \exists \pi.\ \add_\pi \land \F(\inprop_{\pi_{y_1}}\land\argone_\pi) \land \F(\inprop_{\pi_{y_2}}\land\argtwo_\pi) \land \F(\inprop_{\pi_y}\land\res_\pi)$ and $\hyperize(y_1 \cdot y_2=y) = \exists \pi.\ \mult_\pi \land \F(\inprop_{\pi_{y_1}}\land\argone_\pi) \land \F(\inprop_{\pi_{y_2}}\land\argtwo_\pi) \land \F(\inprop_{\pi_y}\land\res_\pi)$.
\end{itemize}
While $f(\phi)$ is not necessarily in prenex normal form, it can easily be brought into prenex normal form, as there are no quantifiers under the scope of a temporal operator.

An induction shows that we have that $\natsstruct\models\phi$ if and only if $f(\phi)$ is satisfiable: Over the structure of $\psi$, one proves for all variable assignments~$\alpha$ of the free variables of a subformula~$\psi'$ that $\natsstruct, \alpha \models \psi'$ if and only if $T_\alpha, \Pi_\alpha, 0 \models \hyperize(\psi')$, where $T_\alpha$ is a set containing enough traces to satisfy $\theta_\all \wedge \theta_\plustimes$, which are additionally marked according to the values~$\alpha(\mathcal{Y}_j)$ such that $\theta_\consistent$ is satisfied, and where $\Pi_\alpha$ mimics the assignment~$\alpha$ to first- and second-order variables as explained above.
Then, the remaining existential third-order quantifiers are mimicked by the choice of $T$, as also explained above.
We leave the fully formal definition and the actual inductive proof, which follows from the remarks above, to the reader.

Now, consider a language~$L$ in $\Sigma^2_1$, i.e., it is of the form~$\set{ n \in\nats \mid \natsstruct\models\phi(n)}$ where $\phi(x)$ is a formula of the form~$\exists \mathcal{Y}_1\ldots \exists \mathcal{Y}_k.\ \psi(x,\mathcal{Y}_1, \ldots, \mathcal{Y}_k) $, with first-order variable~$x$ and third-order variables~$\mathcal{Y}_j$, and with $\psi$ only containing first-order and second-order quantifiers.

There is a function that, given $n\in\nats$ returns a first-order formula~$\theta_{=n}(x)$ with a single free first-order variable~$x$ that is only satisfied in $\natsstruct$ if $x$ is assigned $n$.
This function can be implemented in polynomial time in $\log n$.
So, we have $n \in L$ if and only if $f(\exists x.\ \theta_{=n}(x) \wedge \phi(x))$ is satisfiable.
Thus, we have completed the desired reduction: $\hyqptl$ satisfiability is indeed $\Sigma^2_1$-hard.
\end{proof}

Now, we consider the upper bound. Here, we use the fact that traces can be encoded as sets of natural numbers via an encoding that is \myquot{implementable} in arithmetic.
This allows us to encode the satisfiability problem in arithmetic.

\begin{lemma}
\label{lemma_satcomplexity_hyqptl_ub}
$\hyqptl$ satisfiability is in $\Sigma^2_1$.
\end{lemma}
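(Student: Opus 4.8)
The plan is to show that $\hyqptl$ satisfiability can be expressed by a $\Sigma^2_1$ formula of arithmetic, i.e., a sentence of the form $\exists \calY_1 \cdots \exists \calY_k.\ \psi$ where the $\calY_j$ are third-order variables and $\psi$ uses only first- and second-order quantifiers. The crucial observation, already hinted at in the excerpt, is the correspondence between the three levels of the hyperlogic and the three orders of arithmetic: a single trace over $\pow{\ap}$ (for fixed finite $\ap$) encodes as a type~$1$ object (a subset of $\nats$, obtained by interleaving the characteristic sequences of the finitely many propositions), a \emph{set} of traces $T$ — a model — encodes as a type~$2$ object, and the existence of a model is therefore one outermost third-order existential quantifier. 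I would first fix this encoding precisely and verify that the basic predicates ``$Y$ encodes a valid trace over $\pow{\ap}$'' and ``$n$ is the value of proposition $\prop$ at position $i$ of the trace encoded by $Y$'' are arithmetic (indeed first-order, given the pairing/interleaving), using that a fixed finite $\ap$ means only finitely many bit-sequences are interleaved.

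Next I would handle the quantifier structure of the $\hyqptl$ sentence $\phi$. Trace quantifiers $\exists\pi / \forall\pi$ range over members of the model $T$, so they translate to second-order quantifiers $\exists Y\, (Y \in \calY_T \wedge \cdots)$ and $\forall Y\, (Y \in \calY_T \rightarrow \cdots)$, where $\calY_T$ is the type~$2$ object encoding $T$ and membership $Y \in \calY_T$ is the arithmetic $\in$-predicate. Propositional quantifiers $\exists\propvar / \forall\propvar$ range over a single sequence $t_\propvar \in (\pow{\set{\propvar}})^\omega$, i.e., over an arbitrary subset of $\nats$; these translate to ordinary second-order quantifiers $\exists Z / \forall Z$ over type~$1$ objects. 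The key point making the whole thing $\Sigma^2_1$ is that \emph{all} of these are of order at most two: only the single outermost ``there exists a model'' is third-order. I would carry a trace assignment $\Pi$ and a propositional reassignment along in the encoding; since reassigning $\propvar$ via $t[\propvar\mapsto t_\propvar]$ just overwrites one bit-track, its effect on the encoded sets is arithmetically definable, and I would thread the current value of each quantified proposition through as a bound second-order variable rather than actually modifying $\calY_T$.

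For the temporal core $\psi$ I would define, by induction on the formula and relative to an assignment of positions and the encoded traces, an arithmetic predicate $\mathrm{sat}(\cdots, i)$ expressing $T,\Pi,i \models \psi$. Atomic formulas $\prop_\pi$ and $\propvar$ unfold to first-order statements about the encoded bit at position $i$ (for $\propvar$, a first-order statement universally quantified over the traces in $\calY_T$, i.e., second-order); Boolean connectives are immediate; $\X$ shifts $i$ to $i+1$; and $\F\psi$ becomes $\exists i' \ge i.\ \mathrm{sat}(\cdots,i')$, a first-order quantifier over positions. The subtlety worth stating is that the temporal operators never introduce recursion over traces or sets — they only quantify over the type~$0$ position $i$ — so each temporal connective costs at most a first-order quantifier, and the whole of $\psi$ stays within first- and second-order quantification once $\calY_T$ is fixed. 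Assembling everything, the final formula is $\exists \calY_T.\ (\calY_T \text{ is nonempty} \wedge \text{each member encodes a valid trace} \wedge \mathrm{sat}(\phi))$, witnessing membership in $\Sigma^2_1$.

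The main obstacle I anticipate is bookkeeping rather than a conceptual difficulty: one must be careful that the propositional reassignment semantics $T[\propvar \mapsto t_\propvar]$ is faithfully simulated without ever needing a fresh third-order quantifier. Concretely, after quantifying $\propvar$ one must evaluate atoms $\propvar$ and (indirectly) $\prop_\pi$ against the \emph{modified} set, and the naive reading would seem to require constructing the modified type~$2$ object $T[\propvar\mapsto t_\propvar]$. The clean fix — and the step I would take most care over — is to keep $\calY_T$ fixed and instead maintain, for each propositional variable currently in scope, the chosen type~$1$ witness $Z$, then define satisfaction of $\propvar$ directly as ``$i \in Z$'' and reinterpret the $\set{\propvar}$-track of every $\prop_\pi$ lookup through the active $Z$ for $\propvar$; this keeps the reassignment entirely at the second-order level and confines all third-order quantification to the single outermost existential, which is exactly what $\Sigma^2_1$ permits.
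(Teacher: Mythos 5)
Your route is genuinely different from the paper's, and the difference hides the main gap. Membership of the satisfiability \emph{problem} in $\Sigma^2_1$ requires a \emph{single} arithmetic formula~$\theta(x)$ with one free first-order variable such that $\phi$ is satisfiable if and only if $\natsstruct \models \theta(\enc(\phi))$; this is what the paper constructs. Your construction instead produces, for each input~$\phi$, a \emph{different} $\Sigma^2_1$ sentence whose second-order quantifier prefix mirrors the quantifier prefix of $\phi$: its length and alternation pattern grow with $\phi$, so these sentences cannot be merged into one formula taking $\enc(\phi)$ as input, and ``assembling everything'' does not by itself witness membership of the problem in $\Sigma^2_1$. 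The gap can be closed by composing your (computable) per-formula translation with two standard facts --- $\Sigma^2_1$ is closed under computable many-one reductions, and truth of $\Sigma^2_1$ sentences is itself in $\Sigma^2_1$ --- but establishing the latter requires existentially quantifying a Tarski-style satisfaction table as a third-order object and checking its local consistency by first- and second-order means, which is essentially the machinery you were hoping to avoid. The paper builds this uniformity in directly: it existentially quantifies, as third-order objects, the encoded model~$\mathcal{T}$, a single Skolem-function object~$\mathcal{S}$ covering \emph{all} existentially quantified trace and propositional variables, and an expansion (satisfaction-table) object~$\mathcal{E}$ for the quantifier-free part, and then states consistency conditions by quantifying over encoded assignments. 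The Skolem functions and the expansion are not optional baggage; they are precisely how the input-dependent quantifier prefix and the temporal structure are made uniform in $\enc(\phi)$.

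Separately, the ``clean fix'' you single out as the delicate step is not faithful to the paper's semantics in one of the two binding orders. Propositional quantification updates the set~$T$ but leaves the trace assignment~$\Pi$ untouched, so a trace variable~$\pi$ bound \emph{before} $\exists \propvar$ keeps the original $\propvar$-values of its trace; only trace variables bound \emph{afterwards} see the quantified values. Your rule routes every lookup of $\propvar_\pi$ through the currently active witness~$Z$, which fails in the first case: the sentence~$\exists \pi.\ \exists \propvar.\ (\propvar_\pi \leftrightarrow \neg \propvar)$ is satisfied by \emph{every} nonempty~$T$ under the paper's semantics (choose the witness for $\propvar$ so that its value at position~$0$ disagrees with that of $\Pi(\pi)$), yet your translation renders it as a formula asserting $0 \in Z \leftrightarrow 0 \notin Z$, which is unsatisfiable; the reduction would thus be unsound. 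In your per-formula setting the repair is easy --- route $\propvar_\pi$ through $Z$ only when $\pi$ is quantified inside the scope of the quantifier binding~$\propvar$, and through the trace's own $\propvar$-track otherwise --- but it has to be stated, since the meaning of the atom depends on the binding order. In fairness, labeled occurrences of quantified propositions are a delicate corner of the paper's own proof as well: its $\Pi_t,\Pi_p$-semantics always evaluates $\propvar_\pi$ on the original trace, and both treatments become correct under the natural (but unstated) assumption that quantified propositions occur only unlabeled.
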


\begin{proof}
Formulas~$\phi$ of $\hyqptl$ can be encoded via natural numbers~$\enc(\phi)$ in a way that the encoding is implementable in first-order arithmetic (e.g., by a Gödel numbering). 
Relying on such an encoding, we present a formula~$\theta(x)$ of arithmetic with a single free (first-order) variable~$x$ such that a $\hyqptl$ sentence~$\phi$ is satisfiable if and only if $\natsstruct \models \theta(\enc(\phi))$. 
To obtain our result, $\theta$ may only use existential third-order quantification, but arbitrary second- and first-order quantification.

Intuitively, we  express the existence of a nonempty set of traces and the existence of some structures that prove that this set is a model of $\phi$, generalizing the technique used to prove  $\Sigma_1^1$-membership of $\hyltl$ satisfiability~\cite{hyperltlsat}.
These structures include in particular Skolem functions for the existential variables. 
Then, we express in arithmetic that every trace assignment (which can be encoded by a type~$1$ object) that is consistent with the Skolem functions satisfies the maximal quantifier-free subformula of $\phi$.
To evaluate this subformula, we implement the semantics of quantifier-free $\hyqptl$ in arithmetic by existentially quantifying a function that assigns to each trace assignment~$\Pi$ and each quantifier-free subformula~$\psi'$ of $\phi$ the truth value of $\psi'$ with respect to $\Pi$. This function is again an existentially quantified type~$2$ object and its correctness can be expressed in first-order arithmetic.
In the following, we introduce the necessary encoding of traces, an equivalent semantics of $\hyqptl$ that is more convenient for the construction here, Skolem functions, and then the construction of the formula~$\theta$.

For the remainder of the proof, let us fix a $\hyqptl$ sentence~$\phi$ over $\ap$.
To simplify our constructions, we assume w.l.o.g.\ that each trace variable and each proposition is quantified at most once in $\phi$. Also, we assume w.l.o.g.\ that $\ap$ and the set of trace variables appearing in $\phi$ are disjoint finite subsets of $\nats$.
These properties can be guaranteed by renaming trace variables and propositions.

Let us first explain how we encode traces as sets of natural numbers. 
Let $\pair \colon \nats\times\nats \rightarrow\nats$ denote Cantor's pairing function defined as $\pair(i,j) = \frac{1}{2}(i+j)(i+j+1) +j$, which is a bijection and can be implemented in arithmetic.
Then, we encode a trace~$t \in (\pow{\ap})^\omega$ by the set~$S_t =\set{\pair(i,\prop) \mid i \in \nats \text{ and } \prop \in t(i)} \subseteq \nats$.
Now, one can write a formula~$\phi_\istrace(Y)$ which is satisfied in $\natsstruct$ if and only if the interpretation of $Y$ encodes a trace over $\ap$~\cite{DBLP:conf/csl/Frenkel025}.

Recall that our plan is to existentially quantify a model (a set of traces), which is encoded by a set of sets of natural numbers, i.e., a type~$2$ object. 
Then, we want to evaluate the formula over that model using arithmetic.
Furthermore, the semantics of $\hyqptl$ as defined in Subsection~\ref{subsec_hyperqptl_syse} updates the model~$T$ every time a proposition is quantified, i.e., from $T$ to $T[\propvar\mapsto t_\propvar]$ for some trace~$t_\propvar \in (\pow{\propvar})^\omega$.
To capture this update naively, in case $\propvar$ is universally quantified, requires universal quantification of a type~$2$ object that captures the encoding of $T[\propvar\mapsto t_\propvar]$.
However, this would not yield the desired $\Sigma_1^2$ upper bound.

However, we do not need to update the model~$T$ as long as we keep track of $t_\propvar$, as all traces in $T$ have the same truth values w.r.t.\ $\propvar$, i.e., those of $t_\propvar$.
Hence, we keep track of the truth values assigned to quantified propositions using an assignment mapping propositions~$\propvar$ to sequences in $(\pow{\set{\propvar}})^\omega$.
Thus, in this proof we work with two assignments, a trace assignment and a propositional assignment. 
To clearly distinguish them, we use the symbols~$\Pi_t$ and $\Pi_p$, respectively.

Next, we introduce the modified semantics. For a position~$i$ and $T$, $\Pi_t$, and $\Pi_p$ as above, we define 
\begin{itemize}

    \item $T, \Pi_t, \Pi_p, i \models \prop_\pi $ if  $\prop $ is in the domain of $\Pi_p$ and $\prop\in\Pi_p(\prop)(i)$, or if $\prop$ is not in the domain of $\Pi_p$ and $\prop \in \Pi_t(\pi)(i)$, 
    
    \item $T, \Pi_t, \Pi_p, i \models \propvar $ if $\propvar\in \Pi_p(\propvar)(i)$,
    
    \item $T, \Pi_t, \Pi_p, i \models \lnot \psi' $ if  $T, \Pi_t, \Pi_p, i \not\models \psi'$, 
    
    \item $T, \Pi_t, \Pi_p, i \models \psi_1' \lor \psi_2' $ if  $T, \Pi_t, \Pi_p, i \models \psi_1'$ or $T, \Pi_t, \Pi_p, i \models \psi_2'$, 
    
    \item $T, \Pi_t, \Pi_p, i \models \X \psi' $ if  $T, \Pi_t, \Pi_p, i+1 \models \psi'$,
    
    \item $T, \Pi_t, \Pi_p, i \models \F \psi' $ if  there is an $i' \ge i$ such that $T, \Pi_t, \Pi_p, i' \models \psi'$,

    \item $T, \Pi_t, \Pi_p,  i \models \exists \pi.\, \phi $ if  there exists a trace~$t \in T$ such that $T, \Pi_t[\pi\mapsto t], \Pi_p, i \models \phi$, 
    
    \item $T, \Pi_t, \Pi_p, i \models \forall \pi.\ \phi $ if for all traces~$t \in T$ we have $T, \Pi_t[\pi\mapsto t], \Pi_p, i \models \phi$, 
    
    \item $T, \Pi_t, \Pi_p, i \models \exists \propvar.\ \phi $ if  there exists a trace~$t_\propvar \in (\pow{\set{\propvar}})^\omega$ such that $T, \Pi_t, \Pi_p[\propvar\mapsto t_\propvar], i \models \phi$, and  

    \item $T, \Pi_t, \Pi_p, i \models \forall \propvar.\ \phi $ if for all traces~$t_\propvar \in (\pow{\{\propvar\}})^\omega$ we have $T, \Pi_t, \Pi_p[\propvar\mapsto t_\propvar], i \models \phi$. 
\end{itemize}
We have that $T, \Pi_{\emptyset}, 0 \models \phi$ (i.e., in the semantics as in Subsection~\ref{subsec_hyperqptl_syse}) if and only if $T, \Pi_{t,\emptyset}, \Pi_{p,\emptyset}, 0 \models \phi$ (i.e., in the new semantics) for all nonempty~$T$ and all $\hyqptl$ sentences~$\phi$, where $\Pi_{t,\emptyset}$ and $\Pi_{p,\emptyset}$ denote the assignments with empty domain.
Also, for quantifier-free formulas~$\psi'$, $T, \Pi_t, \Pi_p, i \models \psi'$ is independent of $T$, i.e., we have $T, \Pi_t, \Pi_p, i \models \psi'$ if and only if $T', \Pi_t, \Pi_p, i \models \psi'$ for all $T$ and $T'$.
This is due to the fact that $T$ is only referred to in the cases of trace quantification in the new semantics.
This is useful later when we construct $\theta$.

To simplify the construction of the formula~$\theta(x)$, we capture existential quantification of traces and propositions by Skolem functions.
This allows us replace all quantifiers by a single universal quantifier that ranges over assignments that are consistent with the Skolem functions. 

Fix some nonempty set~$T$ of traces over $\ap$.
Let $\pi$ be an existentially quantified trace variable in $\phi$ and let $\pi_1, \ldots, \pi_k$ be the trace variables that are universally quantified before $\pi$ and let $\propvar_1,\ldots, \propvar_\ell$ be the propositional variables that are universally quantified before $\pi$.
A $T$-Skolem function for $\pi$ is a function mapping a $k$-tuple of traces in $T$ and an $\ell$-tuple of traces (the $i$-th one over $\set{\propvar_i}$) to a trace in $T$.
Similarly, let $\propvar$ be an existentially quantified propositional variable in $\phi$ and let $\pi_1, \ldots, \pi_k$ be the trace variables that are universally quantified before $\pi$ and let $\propvar_1,\ldots, \propvar_\ell$ be the propositional variables that are universally quantified before $\pi$.
A $T$-Skolem function for $\propvar$ is a function mapping a $k$-tuple of traces in $T$ and an $\ell$-tuple of traces (the $i$-th one over $\set{\propvar_i}$) to a trace over $\set{\propvar}$.

Consider a pair~$(\Pi_t, \Pi_p)$ of a trace assignment and a propositional assignment such that all universally quantified trace variables in $\phi$ are mapped to some trace in $T$ by $\Pi_t$ and all universally quantified propositional variables~$\propvar$ are mapped to some trace in $(\pow{\set{\propvar}})^\omega$ by $\Pi_p$.
We say that $(\Pi_t, \Pi_p)$ is consistent with a $T$-Skolem function~$f$ for an existentially quantified trace variable~$\pi$ if $\Pi_t(\pi) = f( \Pi_t(\pi_1), \ldots, \Pi_t(\pi_k), \Pi_p(\propvar_1), \ldots, \Pi_p(\propvar_\ell) )$ and it is consistent with a $T$-Skolem function~$f$ for an existentially quantified propositional variable~$\propvar$ if $\Pi_t(\propvar) = f( \Pi_t(\pi_1), \ldots, \Pi_t(\pi_k), \Pi_p(\propvar_1), \ldots, \Pi_p(\propvar_\ell) )$.
In both cases, the $\pi_j$ and the $\propvar_j$ are the trace and propositional variables universally quantified before $\pi$ and $\propvar$, respectively. 

Recall that we have fixed a $\hyqptl$ sentence~$\phi$. 
Let  $\psi$ be the maximal quantifier-free subformula of $\phi$.
Now, we have $T \models \phi$ (in the semantics from Subsection~\ref{subsec_hyperqptl_syse}) if and only if there are $T$-Skolem functions for all existentially quantified trace variables and for all existentially quantified propositional variables such that 
\begin{itemize}
    \item for every trace assignment~$\Pi_t$ whose domain contains all trace variables of $\psi$, that maps universally quantified variables to $T$, and that is consistent with the Skolem functions and
    \item for every propositional assignment~$\Pi_p$ whose domain contains all propositional variables in $\psi$, that maps universally quantified~$\propvar$ to traces over $\set{\propvar}$, and that is consistent with the Skolem functions,
\end{itemize}
we have $\Pi_t, \Pi_p, 0 \models \psi$.
Thus, we have related the original semantics with the semantics using Skolem functions and propositional assignments.

Finally, we need to witness the satisfaction of a quantifier-free formula by assignments~$\Pi_t$ and $\Pi_p$. 
Let $\Psi$ be the set of subformulas of $\psi$, which is the maximal quantifier-free subformula of $\phi$.
Now, we define the expansion~$e_{\psi, \Pi_t, \Pi_p} \colon \Psi\times\nats \rightarrow \set{0,1}$ of $\psi$ with respect to a trace assignment~$\Pi_t$ and a propositional assignment~$\Pi_p$ via
\[
e_{\psi, \Pi_t, \Pi_p}(\psi', i)=
\begin{cases}
1 &\text{ if } \Pi_t, \Pi_p, i      \models \psi',\\
0 &\text{ if } \Pi_t, \Pi_p, i \not \models \psi'.\\
\end{cases}
\]
In fact, $e_{\psi, \Pi_t, \Pi_p}$ is completely characterized by the following consistency conditions that only depend on $\Pi_t$ and $\Pi_p$:
\begin{itemize}
    \item $e_{\psi, \Pi_t, \Pi_p}(\prop_\pi, i) = 1$ if and only if either $\prop$ is in the domain of $\Pi_p$ and $\prop \in \Pi_p(\prop)(i)$ or if $\prop$ is not in the domain of $\Pi_p$ and $\prop\in\Pi_t(\pi)(i)$,
    \item $e_{\psi, \Pi_t, \Pi_p}(\propvar, i) = 1$ if and only if $\propvar\in \Pi_p(\propvar)(i)$,
    \item $e_{\psi, \Pi_t, \Pi_p}(\neg\psi', i) = 1$ if and only if $e_{\psi, \Pi_t, \Pi_p}(\psi', i) = 0$,
    \item $e_{\psi, \Pi_t, \Pi_p}(\psi_1' \vee \psi_2', i) = 1$ if and only if $e_{\psi, \Pi_t, \Pi_p}(\psi_1', i) = 1$ or $e_{\psi, \Pi_t, \Pi_p}(\psi_2', i) = 1$, 
    \item $e_{\psi, \Pi_t, \Pi_p}(\X\psi', i) = 1$ if and only if $e_{\psi, \Pi_t, \Pi_p}(\psi', i+1) = 1$, and 
    \item $e_{\psi, \Pi_t, \Pi_p}(\F\psi', i) = 1$ if and only if $e_{\psi, \Pi_t, \Pi_p}(\psi', i') = 1$ for some $i' \ge i$.
\end{itemize}
Note that all these conditions are expressible using a formula of first-order arithmetic (that depends on $\psi$).

Now, the formula~$\theta$ expresses the existence of the following third-order objects:
\begin{itemize}
    \item A nonempty set~$\mathcal{T}$ of sets of natural numbers intended to encode a set of traces over $\ap$.
    \item A function $\mathcal{S} \colon \nats \times (\pow{\nats})^* \rightarrow \pow{\nats}$ to be interpreted as Skolem functions, i.e., it maps variable or proposition names (the first argument) and (encodings of) sequences of traces (the second argument) to traces.
    \item A function~$\mathcal{E} \colon (\pow{\nats})^* \times \nats \times \nats \rightarrow \set{0,1}$ to be interpreted as expansions as follows: if the first argument~$A$ encodes a trace assignment~$\Pi_t$ and a propositional assignment~$\Pi_p$ whose domains contain exactly the variables quantified in $\phi$, then $x,y \mapsto \mathcal{E}(A, x,y )$ encodes the expansion~$e_{\psi, \Pi_t, \Pi_p}$, i.e., $x$ encodes a subformula of $\psi$ and $y$ a position.
\end{itemize}
Note that $\mathcal{S}$ and $\mathcal{E}$ can also be encoded as sets of sets of natural numbers, using the fact that a sequence~$(S_1, \cdots, S_k) \in(\pow{\nats})^*$ can be encoded by the set~$\set{\pair(n,j) \mid n \in S_j} \in \pow{\nats}$, and that a function~$f\colon \nats \rightarrow \nats$ can be encoded by $\set{\pair(n, f(n)) \mid n \in \nats} \subseteq \nats$. 
All these encodings are implementable in arithmetic.

Then, we express the following properties using only first- and second-order quantification: 
\begin{itemize}
    \item $\mathcal{T}$ does indeed only contain encodings of traces over $\ap$, i.e., it encodes a set~$T$ of traces (using the formula~$\phi_\istrace$ from above).
    \item The image of~$\mathcal{S}$ only contains traces over $T$, if the first argument is a trace variable, and traces over~$\set{\propvar}$, if the first input is a propositional variable~$\propvar$ (using a formula~$\phi_\skolemformat(\mathcal{S},\enc(\phi))$ of second-order arithmetic).
    \item For every $A \in (\pow{\nats})^*$ encoding a trace assignment~$\Pi_t$ and a propositional assignment~$\Pi_p$ such that 
    \begin{itemize}
        \item $(\Pi_t, \Pi_p)$ is consistent with the Skolem functions encoded by $\mathcal{S}$,

        \item all universally quantified trace variables in $\phi$ are mapped by $\Pi_t$ to some trace encoding of a trace in $T$, and
        \item all universally quantified propositional variables~$\propvar$ in $\phi$ are mapped by $\Pi_p$ to an encoding of a trace over $\set{\propvar}$,
        
    \end{itemize}
        we require that
      $\mathcal{E}(A,\cdot,\cdot)$ satisfies the consistency conditions (i.e., $\mathcal{E}$ indeed encodes the expansion) and that we have $\mathcal{E}(A,n_0,0) = 1$, where $n_0$ is the natural number encoding $\psi$ (using a formula~$\phi_\skolemcorrect(\mathcal{S}, \enc(\phi))$ of second-order arithmetic).
\end{itemize}
Thus, $\theta$ has the form
\begin{multline*}
\exists \mathcal{T}.\ 
\exists \mathcal{S}.\ 
\exists \mathcal{E}.\ \exists Y.\ Y \in \mathcal{T} \wedge 
\forall Y.\ (Y \in \mathcal{T} \rightarrow \phi_\istrace(Y)) \wedge \phi_\skolemformat(\mathcal{S},\enc(\phi)) \wedge \phi_\skolemcorrect(\mathcal{S}, \enc(\phi)).    
\end{multline*}
We leave the tedious, but standard, construction of $\phi_\skolemformat(\mathcal{S},\enc(\phi))$ and $\phi_\skolemcorrect(\mathcal{S}, \enc(\phi))$ to the reader.

Putting all the pieces together yields that $\phi$ is satisfiable if and only if $\natsstruct \models \theta(\enc(\phi))$.
\end{proof}

Now, our main result on $\hyqptl$ satisfiability is a direct consequence of Lemma~\ref{lemma_satcomplexity_hyqptl_lb} and Lemma~\ref{lemma_satcomplexity_hyqptl_ub}.

\begin{theorem}
\label{theorem_satcomplexity_hyqptl}
$\hyqptl$ satisfiability is $\Sigma^2_1$-complete.
\end{theorem}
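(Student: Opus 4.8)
The plan is to obtain the theorem as an immediate consequence of the matching lower and upper bounds already established, so that the only remaining work is to combine them against the definition of $\Sigma^2_1$-completeness. Recall that a problem is $\Sigma^2_1$-complete if it lies in $\Sigma^2_1$ and every language in $\Sigma^2_1$ reduces to it (via the polynomial-time many-one reductions underlying the hardness argument). Both halves are already in hand, so I would simply assemble them.

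For membership, I would invoke Lemma~\ref{lemma_satcomplexity_hyqptl_ub}, which produces an arithmetic formula~$\theta(x)$ using only existential third-order quantification (together with arbitrary first- and second-order quantification) such that a $\hyqptl$ sentence~$\phi$ is satisfiable if and only if $\natsstruct \models \theta(\enc{\phi})$. Since $\enc{\cdot}$ is implementable in first-order arithmetic and $\theta$ has exactly the quantifier shape demanded by the definition of $\Sigma^2_1$ in Section~\ref{subsec_arithmetic}, the set of encodings of satisfiable $\hyqptl$ sentences lies in $\Sigma^2_1$.

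For hardness, I would invoke Lemma~\ref{lemma_satcomplexity_hyqptl_lb}, which exhibits, for every language in $\Sigma^2_1$, a polynomial-time reduction to $\hyqptl$ satisfiability. Combining this with the membership established above yields $\Sigma^2_1$-completeness, which is precisely the claim.

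The genuine difficulties were all dispatched inside the two lemmas: forcing continuum-sized models and \myquot{implementing} addition and multiplication for the lower bound, and the Skolem-function and expansion encodings that keep all witnessing data within a single block of existential third-order quantifiers for the upper bound. At the level of the theorem itself the only point requiring care is bookkeeping, namely checking that the upper-bound formula really has the prenex $\Sigma^2_1$ form (a single block of existential third-order quantifiers over an arithmetic matrix containing only lower-order quantifiers), and that the hardness reduction is of a kind under which $\Sigma^2_1$ is closed, so that membership together with hardness genuinely yields completeness. I expect no further obstacle beyond this routine verification.
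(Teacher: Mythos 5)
Your proposal is correct and matches the paper's proof exactly: the paper likewise obtains Theorem~\ref{theorem_satcomplexity_hyqptl} as a direct consequence of Lemma~\ref{lemma_satcomplexity_hyqptl_lb} (hardness) and Lemma~\ref{lemma_satcomplexity_hyqptl_ub} (membership). Your additional remarks about verifying the prenex $\Sigma^2_1$ shape of the upper-bound formula and the closure of $\Sigma^2_1$ under the reductions are sensible bookkeeping but introduce nothing beyond what the two lemmas already provide.
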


\section{\texorpdfstring{$\hyqptlplus$}{HyperQPTL\boldmath$^+$}}
\label{sec_hyqptlplus}

In this section, we introduce $\hyqptlplus$ and then settle the complexity of its satisfiability, finite-state satisfiability, and model-checking problems.

\subsection{Syntax and Semantics}

In $\hyqptl$, quantification over a proposition~$\propvar$ is interpreted as labeling each trace by the same sequence~$t_\propvar$ of truth values for $\propvar$, i.e., the assignment of truth values is uniform. 
However, one can also consider a non-uniform labeling by truth values for $\propvar$. This results in the logic~$\hyqptlplus$.

The syntax of $\hyqptlplus$~\cite{FinkbeinerHHT20} is very similar to that of $\hyqptl$, one just drops the atomic formulas of the form~$\propvar$, i.e., atomic propositions that are not labeled by trace variables:
\begin{align*}
    \phi &{} \cceq {} \exists \pi.\ \phi \mid \forall \pi.\ \phi \mid \exists \propvar.\ \phi \mid \forall \propvar.\ \phi \mid \psi \\
    \psi&{} \cceq{} \prop_\pi \mid \lnot \psi \mid \psi \lor \psi \mid \X \psi \mid \F \psi
\end{align*}
Here $\prop$ and $\propvar$ range over $\ap$ and $\pi$ ranges over $\var$.
The semantics are also similar, we just change the definition of propositional quantification as follows:
\begin{itemize}
    \item $T, \Pi, i \models \exists \propvar.\ \phi $ if  there exists a $T' \subseteq (\pow{\ap})^\omega$ such that $T =_{\ap\setminus\set{\propvar}} T'$ and $T', \Pi, i  \models \phi$, and  

    \item $T, \Pi, i \models \forall \propvar.\ \phi $ if for all $T' \subseteq (\pow{\ap})^\omega$ such that $T =_{\ap\setminus\set{\propvar}} T'$  we have $T', \Pi, i \models \phi$. 
\end{itemize}

It is known that model-checking $\hyqptlplus$ is undecidable~\cite{FinkbeinerHHT20}, but its exact complexity is open, as is the complexity of satisfiability and finite-state satisfiability.

In the following, we show that $\hyqptlplus$ is equally expressive as $\sohyltl$, which allows us to transfer the complexity results for $\sohyltl$ to $\hyqptlplus$.

\subsection{Second-order \texorpdfstring{$\hyltl$}{HyperLTL}}
\label{subsec_sohyperltl}

We continue by introducing the syntax and semantics of $\sohyltl$~\cite{DBLP:conf/cav/BeutnerFFM23}.
Let $\fovar$ be a set of first-order trace variables (i.e., ranging over traces) and $\sovar$ be a set of second-order trace variables (i.e., ranging over sets of traces) such that $\fovar \cap \sovar = \emptyset$.

The formulas of $\sohyltl$ are given by the grammar
\begin{align*}
\phi & {} \cceq {} \exists X.\ \phi \mid \forall X.\ \phi \mid \exists \pi \in X.\ \phi \mid \forall \pi \in X.\ \phi \mid \psi \\
\psi &{}  \cceq {} \prop_\pi \mid \neg \psi \mid \psi \vee \psi \mid \X \psi \mid \psi \U \psi    
\end{align*}
where $\prop$ ranges over $\ap$, $\pi$ ranges over $\fovar$, $X$ ranges over $\sovar$, and $\X$ (next) and $\U$ (until) are temporal operators. Conjunction, implication, equivalence, and the temporal operators eventually and always are defined as usual. 

The semantics of $\sohyltl$ is defined with respect to a variable assignment, i.e., a partial mapping~$\Pi \colon \fovar \cup \sovar \rightarrow (\pow{\ap})^\omega \cup \pow{(\pow{\ap})^\omega}$
such that
\begin{itemize}
    \item if $\Pi(\pi)$ for $\pi\in\fovar$ is defined, then $\Pi(\pi) \in (\pow{\ap})^\omega$ and
    \item if $\Pi(X)$ for $X \in\sovar$ is defined, then $\Pi(X) \in \pow{(\pow{\ap})^\omega}$.
\end{itemize}
Given a variable assignment~$\Pi$, a variable~$\pi \in \fovar$, and a trace~$t$, we denote by $\Pi[\pi \mapsto t]$ the assignment that coincides with $\Pi$ on all variables but $\pi$, which is mapped to $t$. 
Similarly, for a variable~$X \in \sovar$, and a set~$T$ of traces, $\Pi[X \mapsto T]$ is the assignment that coincides with $\Pi$ everywhere but $X$, which is mapped to $T$.

For a variable assignment~$\Pi$ and a position~$i$ we define
\begin{itemize}
	\item $\Pi, i \models \prop_\pi$ if $\prop \in \Pi(\pi)(i)$,
	\item $\Pi, i \models \neg \psi$ if $\Pi, i \not\models \psi$,
	\item $\Pi, i \models \psi_1 \vee \psi_2 $ if $\Pi, i \models \psi_1$ or $\Pi, i \models \psi_2$,
	\item $\Pi, i \models \X \psi$ if $\Pi, i+1 \models \psi$,
	\item $\Pi, i \models \psi_1 \U \psi_2$ if there is an $i' \ge i$ such that $\Pi, i' \models \psi_2$ and for all $i \le i'' < i'$ we have $\Pi, i'' \models \psi_1$,
	\item $\Pi, i \models \exists \pi \in X.\ \phi$ if there exists a trace~$t \in \Pi(X)$ such that $\Pi[\pi \mapsto t], i \models \phi$,
	\item $\Pi, i \models \forall \pi \in X.\ \phi$ if for all traces~$t \in \Pi(X)$ we have $\Pi[\pi \mapsto t], i \models \phi$,
    \item $\Pi, i \models \exists X.\ \phi$ if there exists a set~$T \subseteq (\pow{\ap})^\omega$ such that $\Pi[X\mapsto T], i \models \phi$, and
    \item $\Pi, i \models \forall X.\ \phi$ if for all sets~$T \subseteq (\pow{\ap})^\omega$ we have $\Pi[X\mapsto T], i \models \phi$.
\end{itemize}
Note that these are the standard semantics where second-order quantification ranges over arbitrary sets~\cite{DBLP:conf/cav/BeutnerFFM23}, not the closed-world semantics where second-order quantification only ranges over subsets of the model~\cite{DBLP:conf/csl/Frenkel025}.

We assume the existence of two distinguished second-order variables~$\univar, \unidisvar  \in \sovar$ such that $\univar$ refers to the set~$(\pow{\ap})^\omega$ of all traces, and $\unidisvar$ refers to the universe of discourse (the set of traces the formula is evaluated over).
A sentence is a formula in which only the variables $\univar,\unidisvar$ can be free. 
Thus, technically, a sentence does have free variables. Alternatively, one could treat the distinguished variables with distinguished syntax so that they are not free. 
However, for the sake of definitional simplicity, we refrain from doing so. 

We say that a nonempty set~$T$ of traces satisfies a $\sohyltl$ sentence~$\phi$, written $T \models \phi$, if $\Pi_\emptyset[\univar \mapsto (\pow{\ap})^\omega, \unidisvar \mapsto T],0\models \phi$, where $\Pi_\emptyset$ denotes the variable assignment with empty domain. 
In this case, we say that $T$ is a model of $\varphi$.
A transition system~$\tsys$ satisfies $\phi$, written $\tsys \models \phi$, if $\traces(\tsys)\models \phi$. 

\subsection{\texorpdfstring{$\hyqptlplus$}{HyperQPTL\boldmath$^+$} ``is'' \texorpdfstring{$\sohyltl$}{Second-order HyperLTL}}

In this subsection, we show that $\hyqptlplus$ and $\sohyltl$ are equally expressive by translating $\sohyltl$ into $\hyqptlplus$ and vice versa.

\begin{lemma}
\label{lemma_sohyltl2qptlplus}
There is a polynomial-time computable function~$f$ mapping $\sohyltl$ sentences~$\phi$ to $\hyqptlplus$ sentences~$f(\phi)$ such that we have $T \models \phi$ if and only if $T \models f(\phi)$ for all nonempty $T \subseteq (\pow{\ap})^\omega$.
\end{lemma}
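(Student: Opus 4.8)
The plan is to translate $\sohyltl$ into $\hyqptlplus$ by encoding second-order trace quantification via non-uniform propositional quantification. The key observation is that in $\hyqptlplus$, existentially quantifying a fresh proposition~$\propvar$ reassigns truth values of $\propvar$ to each trace in $T$ individually, i.e., it selects a \emph{subset} of $\set{0,1}^\omega$ rather than a single sequence. I would use this to \emph{select} the subset of traces belonging to a second-order variable~$X$: quantifying a proposition~$\propvar_X$ marks, on each trace, whether that trace is a member of the set bound to $X$. Concretely, I plan to represent a second-order variable~$X \in \sovar$ by a dedicated proposition~$\propvar_X$, so that the set assigned to $X$ is the collection of traces~$t \in T$ whose marking is "$\propvar_X$ holds (say, at the first position)." First-order trace quantification $\exists \pi \in X$ then becomes ordinary trace quantification~$\exists \pi$ in $\hyqptlplus$, guarded by the requirement that $\pi$'s $\propvar_X$-marker holds, exactly mirroring the guarding used for first-order variables in the proof of Lemma~\ref{lemma_satcomplexity_hyqptl_lb}.

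The main subtlety is matching the two semantics of second-order quantification. In $\sohyltl$, $\exists X.\ \phi$ ranges over \emph{arbitrary} sets $T' \subseteq (\pow{\ap})^\omega$ of traces, whereas in $\hyqptlplus$ the non-uniform quantification $\exists \propvar_X$ only redistributes the $\propvar_X$-labels among the traces \emph{already present} in $T$. Thus a naive marking can only pick out subsets of the current model, which corresponds to \emph{closed-world} semantics, not the standard semantics of $\sohyltl$. To recover arbitrary sets, I plan to first ensure that the model contains a copy of every trace over $\ap$, using the idea behind $\theta_\all$ from Theorem~1: quantify propositions so that every possible $\set{\inprop}$-behavior is realized. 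More carefully, one guards the distinguished variable~$\unidisvar$ (the universe of discourse) separately from~$\univar$ (all traces): the translation must remember which traces belong to the actual model~$T$ versus which traces are merely present to serve as witnesses for second-order quantification. The cleanest approach is to reserve a distinguished proposition marking membership in $\unidisvar$, and relativize all first-order quantifiers $\exists \pi \in \unidisvar$ to that marker, while letting second-order quantification range over markings of all available traces.

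Given these ingredients, I would define $f(\phi)$ to be the conjunction of (i) a preamble enforcing that enough traces are present (so that non-uniform propositional quantification can realize arbitrary subsets of $(\pow{\ap})^\omega$) and (ii) the inductive translation~$\hyperize(\phi)$, where
\begin{align*}
\hyperize(\exists X.\ \phi) &= \exists \propvar_X.\ \hyperize(\phi), \\
\hyperize(\exists \pi \in X.\ \phi) &= \exists \pi.\ \big((\propvar_X)_\pi \text{-guard}\big) \land \hyperize(\phi), \\
\hyperize(\forall \pi \in X.\ \phi) &= \forall \pi.\ \big((\propvar_X)_\pi \text{-guard}\big) \rightarrow \hyperize(\phi),
\end{align*}
and where the atomic cases, Boolean connectives, $\X$, and $\U$ are translated homomorphically (noting that $\prop_\pi$ carries over directly and $\U$ is available in $\hyqptlplus$ as syntactic sugar). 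The two distinguished variables are handled by: $\univar$ corresponds to the "all traces present" marker (trivially true once the preamble guarantees totality), and $\unidisvar$ to the model-membership marker. The correctness proof is an induction over the structure of $\phi$, showing $\Pi, i \models \psi$ in $\sohyltl$ iff $T^\ast, \Pi', i \models \hyperize(\psi)$ in $\hyqptlplus$, where $T^\ast$ is the padded model and $\Pi'$ encodes each $\Pi(X)$ as the $\propvar_X$-marking.

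The hard part will be the second-order quantifier case in the correctness induction, precisely because of the semantic mismatch noted above: I must verify that every set $T' \subseteq (\pow{\ap})^\omega$ arising in the $\sohyltl$ semantics can be realized as some $\propvar_X$-marking of the padded model, and conversely that every marking yields an admissible $T'$. This hinges on the padding preamble genuinely forcing all of $(\pow{\ap})^\omega$ to be present as (appropriately projected) traces, so that the non-uniform propositional quantifier has enough traces to mark and can thereby witness an arbitrary subset. Getting the interaction between the padding propositions, the membership marker for $\unidisvar$, and the second-order markers~$\propvar_X$ to be mutually consistent—while keeping $f$ polynomial-time and the result expressible without propositional quantifiers appearing under temporal operators—is the delicate bookkeeping that the rest of the proof must discharge.
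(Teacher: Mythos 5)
You correctly isolate the central difficulty---that non-uniform propositional quantification in \hyqptlplus{} only re-labels the traces already in $T$, so a naive marking of traces yields closed-world rather than standard second-order semantics---and your marker-based encoding of set variables with guarded membership quantifiers is in the same spirit as the paper's construction. However, your proposed repair of the mismatch does not work as stated. Your padding preamble is supposed to \myquot{ensure that the model contains a copy of every trace over $\ap$}, after which atomic formulas $\prop_\pi$ \myquot{carry over directly}. This is impossible in \hyqptlplus: quantifying a proposition~$\propvar$ replaces $T$ by some $T'$ with $T =_{\ap\setminus\set{\propvar}} T'$, so the projection of the model onto every proposition that is \emph{not} quantified is invariant. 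Since your translation reads the \emph{original} propositions of $\phi$ in atomic formulas, the traces visible to your second-order markers always project, on exactly those propositions, to the projection of the given $T$. No preamble can enlarge this set, so your $\propvar_X$-markings can only ever denote subsets of (relabelings of) $T$'s traces---precisely the closed-world collapse you set out to avoid. Relatedly, your \myquot{membership marker} for $\unidisvar$ cannot be pinned down by any formula: after any propositional quantification, every trace of the resulting model agrees with some trace of $T$ on the un-quantified propositions, so \myquot{being originally in $T$} is not definable (and, as it turns out, not needed).

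The paper closes this gap with an extra level of indirection that your sketch is missing: it introduces fresh \emph{copies} $\prop_1^\alltraces, \ldots, \prop_n^\alltraces$ of all propositions occurring in $\phi$, existentially quantifies them (non-uniformly) in a prefix, and uses a formula~$\theta_\complete$ (with a further block of universally quantified auxiliary propositions) to force the $\set{\prop_1^\alltraces, \ldots, \prop_n^\alltraces}$-projection of the witnessing $T'$ to be \emph{all} of $(\pow{\set{\prop_1^\alltraces, \ldots, \prop_n^\alltraces}})^\omega$; this is achievable precisely because the fresh propositions are overwritten rather than preserved. Traces bound from set variables~$X_j$ and from $\univar$ are then read off the fresh propositions, which requires substituting $(\prop_j)_\pi$ by $(\prop_j^\alltraces)_\pi$ in atomic formulas (the operation~$\replace{\pi}{\psi}$ in the paper); only variables bound from $\unidisvar$ keep the original propositions, which is exactly why no membership marker for $\unidisvar$ is needed. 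Finally, the markers~$\marker_j$ must be required to be \emph{consistent}---two traces with the same fresh projection receive the same marker---so that markings correspond bijectively to subsets of $(\pow{\ap})^\omega$; this constraint is also absent from your sketch. Without the fresh copies and the substitution step, the induction you outline fails in the second-order and $\univar$ cases, because the sets your markers can denote are not arbitrary.
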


\begin{proof}
Let $\phi$ be a $\sohyltl$ sentence.
We assume w.l.o.g.\ that each (trace and set) variable is quantified at most once in $\phi$. Further, we require that each set variable quantified in $\phi$ is different from $\unidisvar$ and $\univar$. 
These properties can always be achieved by renaming variables.

Our goal is to mimic quantification over sets of traces in $\hyqptlplus$ via (non-uniform) quantification over propositions.
Let ${\prop_1, \ldots, \prop_n}$ be the propositions appearing in $\phi$. Note that the truth values of other propositions can be ignored, as satisfaction of $\phi$ does not depend on them.
We existentially quantify fresh propositions~$\prop_1^\alltraces, \ldots ,\prop_n^\alltraces$ and then express in $\hyqptlplus$ that all traces over these fresh propositions are available to mimic set quantification. 
Then, we use further propositional quantification in $\hyqptlplus$ to mimic quantification over sets~$X$ of traces by labeling the traces over $\prop_1^\alltraces, \ldots ,\prop_n^\alltraces$ with a bit indicating whether they are in $X$ or not.
All other quantifiers, connectives, and temporal operators of $\sohyltl$ can directly be mimicked in $\hyqptlplus$.

In the following, we construct $\theta_\complete$ and $f'(\phi)$ such that 
\[f(\phi) = \exists \prop_1^\alltraces\ldots \exists \prop_n^\alltraces.\ (\theta_\complete \wedge f'(\phi)) \] 
which is satisfied by a nonempty set~$T \subseteq (\pow{\ap})^\omega$ of traces if and only if there is a set~$T'\subseteq (\pow{\ap})^\omega$ of traces such that  $T' \models \theta_\complete \wedge f'(\phi)$ and $T =_{\ap\setminus \set{\prop_1^\alltraces, \ldots ,\prop_n^\alltraces} } T'$.

Due to the second property, each trace in $T'$ has the form~$t\merge t'$ where $t$ is a trace from the $\ap \setminus \set{\prop_1^\alltraces, \ldots ,\prop_n^\alltraces}$-projection of $T$ and $t'$ is a trace over $\set{\prop_1^\alltraces, \ldots ,\prop_n^\alltraces}$.
Also, recall that the propositions~$\prop_j^\alltraces$ do not appear in $\phi$.
Thus, quantification over $T'$ mimics both quantification of traces over the propositions~$\prop_j$ relevant for evaluating $\phi$, and quantification over $(\pow{\set{\prop_1^\alltraces, \ldots ,\prop_n^\alltraces}})^\omega$.

As a first step, we need to express in $\hyqptlplus$ that the $\set{\prop_1^\alltraces, \ldots ,\prop_n^\alltraces}$-projection of a set of traces contains all traces over $\set{\prop_1^\alltraces, \ldots ,\prop_n^\alltraces}$ using the formula
\[
\theta_\complete = \forall \prop_1^\temp\!\!\!\ldots \forall \prop_n^\temp.\ \forall \pi.\ \exists \pi'.\! \bigwedge\nolimits_{j=1}^n \!\!\G((\prop_j^\temp)_\pi \!\leftrightarrow\! (\prop_j^\alltraces)_{\pi'})
\]
with another set~$\set{\prop_1^\temp, \ldots ,\prop_n^\temp}$ of fresh propositions used only in this subformula.
Intuitively, it expresses that for each trace over $\set{\prop_1^\temp, \ldots ,\prop_n^\temp}$ the trace obtained by replacing each~$\prop_j^\temp$ by $\prop_j^\alltraces$ is in each model of $\theta_\complete$, i.e., the $\set{\prop_1^\alltraces, \ldots ,\prop_n^\alltraces}$-projection of each model must indeed be the set of all traces over $\set{\prop_1^\alltraces, \ldots ,\prop_n^\alltraces}$.

Now, let us explain how we mimic the quantification over subsets of traces in $\hyqptlplus$: Similarly to the construction for the $\Sigma_1^2$-lower bound for $\hyqptl$ satisfiability (see Lemma~\ref{lemma_satcomplexity_hyqptl_lb}), we use propositional quantification over another fresh proposition to mark each trace in the $\set{\prop_1^\alltraces, \ldots ,\prop_n^\alltraces}$-projection of a model (say at the first position) to indicate whether it is in the set or not. 
If $X_1,\ldots, X_k$ are the second-order variables quantified in $\phi$, then we employ a marker~$\marker_j$ for each $j \in \set{1,\ldots, k}$.

However, we have to ensure that there are no two different traces that have the same $\set{\prop_1^\alltraces, \ldots ,\prop_n^\alltraces}$-projection, but different values of $\marker_j$. 
This can happen when quantifying $\marker_j$, as we do not mark the $\set{\prop_1^\alltraces, \ldots ,\prop_n^\alltraces}$-projection of the model, but the model itself.
To rule this out, we require that the marking is consistent using the formula
\begin{multline*}
\theta_\consistent^j = \forall \pi.\ (\X\G\neg(\marker_j)_\pi) \wedge \forall \pi'. \bigwedge\nolimits_{\ell = 1}^n \!\Big(\!\G\big( (\prop_\ell^\alltraces)_{\pi} \leftrightarrow (\prop_\ell^\alltraces)_{\pi'} \big)\! \Big) \rightarrow \Big(\!(\marker_j)_{\pi} \leftrightarrow (\marker_j)_{\pi'}\Big).    
\end{multline*}
It expresses that only the first position can be marked (but must not) and that if two traces have the same $\set{\prop_1^\alltraces, \ldots ,\prop_n^\alltraces}$-projection they have the same marking.

Note that traces in $T$ are still encoded by the \myquot{original} propositions~${\prop_1, \ldots ,\prop_n}$ while set quantification is mimicked by traces over the fresh propositions~${\prop_1^\alltraces, \ldots ,\prop_n^\alltraces}$. 
Hence, when mimicking the binding of a trace from some~$X_j$ or the distinguished variable~$\univar$ to some $\pi$, we need to ensure that we use the fresh propositions and not the original ones.
However, when mimicking the binding of a trace from $\unidisvar$, we need to use the original propositions.

We do so by replacing each $(\prop_j)_\pi$ by $(\prop_j^\alltraces)_\pi$, unless when binding a trace from $\unidisvar$ (which ranges over the set~$T$ we are evaluating $\phi$ over).
Thus, we define $f'$ as
\begin{itemize}
\item $f'(\exists X_j.\ \psi) = \exists \marker_j.\ \theta_\consistent^j \wedge f'(\psi)$, 
\item $f'(\forall X_j.\ \psi) = \forall \marker_j.\ \theta_\consistent^j \rightarrow f'(\psi)$, 
\item $f'(\exists \pi \in X_j.\ \psi) = \exists \pi.\ (\marker_j)_\pi \wedge f'(\replace{\pi}{\psi})$, where $\replace{\pi}{\psi}$ is the formula obtained from $\psi$ by replacing each subformula~$(\prop_j)_\pi$ by $(\prop_j^\alltraces)_\pi$ (note that we only replace propositions labeled by $\pi$, the variable quantified here),
\item $f'(\forall \pi \in X_j.\ \psi) = \forall \pi.\ (\marker_j)_\pi \rightarrow f'(\replace{\pi}{\psi})$, 
\item $f'(\exists \pi \in \univar.\ \psi) = \exists \pi.\  f'(\replace{\pi}{\psi})$,
\item $f'(\forall \pi \in \univar.\ \psi) = \forall \pi.\  f'(\replace{\pi}{\psi})$,
\item $f'(\exists \pi \in \unidisvar.\ \psi) = \exists \pi.\ f'(\psi)$, 
\item $f'(\forall \pi \in \unidisvar.\ \psi) = \forall \pi.\ f'(\psi)$, 
\item $f'(\neg \psi) = \neg f'(\psi)$, 
\item $f'(\psi_1 \vee \psi_2) = f'(\psi_1) \vee f'(\psi_2)$, 
\item $f'(\X \psi ) = \X f'(\psi)$,
\item $f'(\F \psi ) = \F f'(\psi)$, and
\item $f'(\prop_\pi) = \prop_\pi$.
\end{itemize}
While $f'(\phi)$ is not necessarily in prenex normal form, it can easily be brought into prenex normal form, as there are no quantifiers under the scope of a temporal operator.

An induction shows that we indeed have $T \models \phi$ if and only if $T \models f(\phi)$ for nonempty~$T$: 
Over the structure of $\phi$, one proves that for all nonempty sets~$T$ of traces and for all variable assignments~$\Pi$ of the free variables of a subformula~$\psi$ that $T, \Pi, i \models \psi$ if and only if $T', \Pi', i \models f'(\psi)$, where $T'$ is obtained from $T$ by completely \myquot{overwriting} the propositions~$\prop_j^\alltraces$ and assigning the marker propositions consistently according to $\Pi$, and where $\Pi'$ is the restriction of $\Pi$ to trace variables.
Again, we leave the tedious, but straightforward, details to the reader.
 \end{proof}

Now, we consider the other direction.

\begin{lemma}
\label{lemma_qptlplus2sohyltl}
There is a polynomial-time computable function~$f$ mapping $\hyqptlplus$ sentences~$\phi$ to $\sohyltl$ sentences~$f(\phi)$ such that we have $T \models \phi$ if and only if $T \models f(\phi)$ for all nonempty $T \subseteq (\pow{\ap})^\omega$.
\end{lemma}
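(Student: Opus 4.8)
The plan is to translate $\hyqptlplus$ into $\sohyltl$ by reversing the intuition behind Lemma~\ref{lemma_sohyltl2qptlplus}. Recall that in $\hyqptlplus$, existential propositional quantification $\exists \propvar.\ \phi$ reassigns the truth values of $\propvar$ to each trace in the current universe $T$ individually, subject only to the constraint $T =_{\ap\setminus\set{\propvar}} T'$. The key observation is that this operation does not change the $(\ap\setminus\set{\propvar})$-projection of the trace set, but freely redistributes $\propvar$-values over it. In $\sohyltl$, the universe of discourse is tracked by the distinguished variable $\unidisvar$, and set quantification ranges over arbitrary subsets of $(\pow{\ap})^\omega$ via the variables $X \in \sovar$. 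So the central idea is to represent the current $\hyqptlplus$ universe as the value bound to a second-order variable, and to simulate each propositional quantifier $\exists \propvar$ by an existential second-order quantifier $\exists X$ together with a guard asserting that the newly chosen set $X$ is a valid $\propvar$-relabeling of the previous universe.

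Concretely, I would maintain, at each point in the inductive translation, a \emph{current universe variable} $Z \in \sovar$ (initialized to $\unidisvar$, so that $f$ on a sentence starts with $Z = \unidisvar$). For $f(\exists \propvar.\ \phi)$ I would emit $\exists X.\ (\guard_{\propvar}(Z,X) \wedge f_X(\phi))$, where $f_X$ is the translation continuing with $X$ as the new current universe, and $\guard_{\propvar}(Z,X)$ is a $\sohyltl$ formula expressing $Z =_{\ap\setminus\set{\propvar}} X$. This guard is expressible: ``$X$ and $Z$ have the same $(\ap\setminus\set{\propvar})$-projection'' means that for every trace in $Z$ there is a trace in $X$ agreeing on all propositions except possibly $\propvar$, and vice versa, which one writes using the first-order quantifiers $\forall \pi \in Z.\ \exists \pi' \in X.\ \G \bigwedge_{\prop \neq \propvar} (\prop_\pi \leftrightarrow \prop_{\pi'})$ and the symmetric conjunct. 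The universal case $f(\forall \propvar.\ \phi)$ becomes $\forall X.\ (\guard_{\propvar}(Z,X) \rightarrow f_X(\phi))$. Trace quantifiers $\exists \pi.\ \phi$ and $\forall \pi.\ \phi$ of $\hyqptlplus$ range over the current universe, so they translate directly to $\exists \pi \in Z.\ f_Z(\phi)$ and $\forall \pi \in Z.\ f_Z(\phi)$; the Boolean connectives and temporal operators commute with $f$, and atoms $\prop_\pi$ are preserved. Since each guard uses only a bounded number of quantifiers and $|\ap|$ is fixed for the formula, $f$ is clearly polynomial-time computable, and it can be brought into prenex form as the guards contain no quantifiers under temporal operators.

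The correctness argument is a structural induction matching the one sketched in Lemma~\ref{lemma_sohyltl2qptlplus}: for every subformula $\psi$ of $\phi$, every nonempty $T$, and every assignment, one shows $T, \Pi, i \models \psi$ in the $\hyqptlplus$ semantics iff $\Pi', i \models f_Z(\psi)$ in the $\sohyltl$ semantics, where $\Pi'$ binds the current-universe variable $Z$ to exactly the set $T$ that $\psi$ is being evaluated over and agrees with $\Pi$ on all trace variables. The crucial inductive step is the propositional quantifier: I need that the set of possible $T'$ with $T =_{\ap\setminus\set{\propvar}} T'$ ranges over in the $\hyqptlplus$ clause corresponds bijectively (or at least satisfaction-preservingly) to the sets $X \subseteq (\pow{\ap})^\omega$ satisfying $\guard_{\propvar}(Z,X)$ with $Z$ bound to $T$ — and this holds precisely because $\guard_{\propvar}(Z,X)$ was designed to be equivalent to $Z =_{\ap\setminus\set{\propvar}} X$.

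The main obstacle I anticipate is getting the guard $\guard_{\propvar}$ exactly right so that it captures $=_{\ap\setminus\set{\propvar}}$ rather than something weaker. The subtlety is that $T =_{\ap\setminus\set{\propvar}} T'$ asserts equality of \emph{projections as sets}, so I must assert both inclusions of projections (the $\forall\pi\in Z.\exists\pi'\in X$ conjunct and its mirror image), and I must use $\G$ to force agreement at \emph{every} position rather than just the first. A secondary technical point is that $\sohyltl$ uses $\U$ as its primitive temporal operator whereas $\hyqptlplus$ uses $\X$ and $\F$; since $\F$ is definable from $\U$ (as $\F\psi \equiv \mathrm{true}\,\U\,\psi$), the preserved atoms and operators translate without trouble, but I should note this definability explicitly. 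Once the guard is verified, the remainder is the routine induction, whose tedious details I would leave to the reader as in the companion lemma.
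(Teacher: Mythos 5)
Your proposal is correct and takes essentially the same approach as the paper: the paper likewise tracks the current $\hyqptlplus$ universe in a second-order variable (initialized to $\unidisvar$), guards each propositional quantifier with the same two-way projection-equality formula (its $\theta_\propvar$ is exactly your $\guard_\propvar$), translates trace quantifiers to membership in the current universe variable, and leaves the structural induction to the reader. The only cosmetic difference is that the paper alternates between just two set variables~$X_0, X_1$ via a flag~$b$, reusing the stale one (which yields its closing remark that two second-order variables suffice), whereas you implicitly allow a fresh variable per propositional quantifier.
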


\begin{proof}
The semantics of $\hyqptlplus$ can directly be expressed in $\sohyltl$. 
Intuitively, propositional quantification in $\hyqptlplus$ \myquot{updates} the set of traces that the trace quantifiers range over (i.e., the $T$ in $T, \Pi, i\models \psi$).
All other quantifiers, temporal operators, and connectives in $\hyqptlplus$ are also available in $\sohyltl$.

To mimic propositional quantification in $\sohyltl$, we use a dedicated second-order variable that explicitly stores the set of traces that the trace quantifiers in the $\hyqptlplus$ formula range over.
This set is initially equal to the set of traces the formula is evaluated over (i.e., the set~$\Pi(\unidisvar)$ in the setting of $\sohyltl$), and is updated with each quantification over a proposition~$\propvar$. 
As this update from $T$ to $T'$ has to satisfy $T =_{\ap\setminus\set{\propvar}} T'$, we need two set variables~$X_0$ and $X_1$, one for the old value and one for the new value, to be able to compare these two. 
As the old value is no longer needed after the update, we can then reuse the variable.
To keep track of which of the two variables~$X_0$ and $X_1$ is currently used to store the set of traces we are evaluating trace quantifiers over, we use a flag~$b \in \set{0,1}$, i.e., $X_b$ is the set variable storing the current value and $X_{1-b}$ is the variable storing the old value.
To enforce a correct update, we define the formula~$\theta_\propvar(X_b, X_{1-b})$ as
\begin{align*}
{}&{}\forall \pi \in X_{b}.\ \exists \pi' \in X_{1-b}. \bigwedge\nolimits_{\prop \in \ap\setminus\set{\propvar}} \G (\prop_\pi \leftrightarrow \prop_{\pi'}) \wedge \\
{}&{} \forall \pi \in X_{1-b}.\ \exists \pi' \in X_b. \bigwedge\nolimits_{\prop \in \ap\setminus\set{\propvar}} \G (\prop_\pi \leftrightarrow \prop_{\pi'})
,    
\end{align*}
which is satisfied by a variable assignment~$\Pi$ if and only if $\Pi(X_b) =_{\ap\setminus\set{\propvar}} \Pi(X_{1-b})$.

Now, we express the semantics of $\hyqptlplus$ in $\sohyltl$. 
Here, we use $X_0 = \unidisvar$ and let $X_1$ be some second-order variable other than $\unidisvar$.
Then, we define
\begin{itemize}
    \item $f'(\exists \pi.\ \psi,b) = \exists \pi\in X_b.\ f'(\psi,b)$,
    \item $f'(\forall \pi.\ \psi,b) = \forall \pi\in X_b.\ f'(\psi,b)$,
    \item $f'(\exists \propvar.\ \psi, b) = \exists X_{1-b}.\ \theta_\propvar(X_{1-b}, X_b) \wedge f'(\psi, 1-b)$,
    \item $f'(\forall \propvar.\ \psi, b) = \forall X_{1-b}.\ \theta_\propvar(X_{1-b}, X_b) \rightarrow f'(\psi, 1-b)$,
    \item $f'(\neg \psi, b)= \neg f'(\psi,b)$, 
    \item $f'(\psi_1 \vee \psi_2, b) = f'(\psi_1, b) \vee f'(\psi_2, b)$,
    \item $f'(\X\psi,b) = \X f'(\psi,b)$,
    \item $f'(\F\psi,b) = \F f'(\psi,b)$, and
    \item $f'(\prop_\pi, b) = \prop_\pi$.
\end{itemize}
Now, define $f(\phi) = f'(\phi,0)$.
While $f(\phi)$ is not necessarily in prenex normal form, it can easily be brought into prenex normal form, as there are no quantifiers under the scope of a temporal operator.

An induction shows that we indeed have $T \models \phi$ if and only if $T \models f(\phi)$: One shows that for all sets~$T$ of traces and all variable valuations~$\Pi$ of the free variables of a subformula~$\psi$ that $T, \Pi, i \models \psi$ if and only if $T, \Pi', i \models f(\psi, b)$, where $\Pi'$ is obtained from $\Pi$ by extending it with the appropriate assignment for the variable~$X_b$, which must satisfy $T =_{\ap \setminus \set{\propvar_1, \ldots, \propvar_k}} \Pi(X_b)$, where the $\propvar_j$ are the free propositional variables of $\psi$.
One last time, we leave the tedious, but straightforward, details to the reader.
\end{proof}

As $\sohyltl$ satisfiability, finite-state satisfiability, and model-checking are equivalent to truth in third-order arithmetic~\cite{DBLP:conf/csl/Frenkel025}, the translations presented in Lemma~\ref{lemma_sohyltl2qptlplus} and Lemma~\ref{lemma_qptlplus2sohyltl} imply that the same is true for $\hyqptlplus$.

\begin{theorem}
$\hyqptlplus$ satisfiability, finite-state satisfiability, and model-checking are equivalent to truth in third-order arithmetic.
\end{theorem}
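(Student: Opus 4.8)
The plan is to reduce the theorem to results already established in the excerpt, treating it purely as a transfer of complexity via the two translations. The key observation is that Lemma~\ref{lemma_sohyltl2qptlplus} and Lemma~\ref{lemma_qptlplus2sohyltl} together establish that $\hyqptlplus$ and $\sohyltl$ are equally expressive in a very strong, \emph{model-preserving} sense: both translations~$f$ are polynomial-time computable, map sentences to sentences, and satisfy $T \models \phi \iff T \models f(\phi)$ for every nonempty $T \subseteq (\pow{\ap})^\omega$. Since the known result for $\sohyltl$ is that its satisfiability, finite-state satisfiability, and model-checking are all equivalent to truth in third-order arithmetic~\cite{DBLP:conf/csl/Frenkel025}, I would argue that each of these three problems transfers to $\hyqptlplus$ by composing the appropriate translation with the known reductions in each direction.

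Concretely, I would handle the three problems in parallel, since the argument is uniform. For each problem, equivalence to truth in third-order arithmetic means there are polynomial-time reductions both from the problem to truth in third-order arithmetic and from truth in third-order arithmetic to the problem. To show the $\hyqptlplus$ problem reduces to third-order arithmetic, I compose: given a $\hyqptlplus$ instance, apply~$f$ from Lemma~\ref{lemma_qptlplus2sohyltl} to obtain an equivalent $\sohyltl$ instance, then apply the known reduction of the $\sohyltl$ problem to third-order arithmetic. For the converse direction, I compose the known reduction from third-order arithmetic to the $\sohyltl$ problem with~$f$ from Lemma~\ref{lemma_sohyltl2qptlplus}. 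The crucial point that makes this work for \emph{all three} problems simultaneously is that the model-preservation property $T \models \phi \iff T \models f(\phi)$ respects the structure of each problem: for satisfiability, $\phi$ has a nonempty model iff $f(\phi)$ does; for finite-state satisfiability, since $\tsys \models \phi$ is defined as $\traces(\tsys) \models \phi$ and $\traces(\tsys)$ is always a nonempty set of traces, the same transition system witnesses satisfiability of $\phi$ and $f(\phi)$; and for model-checking, $\tsys \models \phi \iff \tsys \models f(\phi)$ follows directly, so the same $\tsys$ works as instance in both problems.

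The main subtlety to check, rather than a genuine obstacle, is that the translations of both lemmas preserve each problem's instance format appropriately. For satisfiability this is immediate from the stated equivalence over all nonempty~$T$. For finite-state satisfiability and model-checking I would verify that because $f$ is sentence-to-sentence and the equivalence holds for every nonempty $T$ (in particular for $T = \traces(\tsys)$ of any finite $\tsys$), one can reuse the \emph{same} transition system on both sides; this means the problems reduce to one another with the transition system unchanged and only the formula translated by~$f$, which is polynomial-time computable. Hence all the reductions are polynomial-time, and equivalence to truth in third-order arithmetic is preserved in both directions.

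Therefore the theorem follows: composing the polynomial-time translations of Lemma~\ref{lemma_sohyltl2qptlplus} and Lemma~\ref{lemma_qptlplus2sohyltl} with the reductions witnessing that the corresponding $\sohyltl$ problems are equivalent to truth in third-order arithmetic~\cite{DBLP:conf/csl/Frenkel025} shows that $\hyqptlplus$ satisfiability, finite-state satisfiability, and model-checking are each equivalent to truth in third-order arithmetic. The proof is essentially bookkeeping; the real content lies entirely in the two preceding lemmas, so I expect the write-up to be short.
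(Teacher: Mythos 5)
Your proposal is correct and matches the paper's own argument exactly: the paper derives the theorem directly from Lemma~\ref{lemma_sohyltl2qptlplus}, Lemma~\ref{lemma_qptlplus2sohyltl}, and the known equivalence of the three $\sohyltl$ problems to truth in third-order arithmetic, giving no further proof. Your additional bookkeeping (reusing the same transition system since $T \models \phi \iff T \models f(\phi)$ holds in particular for $T = \traces(\tsys)$) is precisely the implicit content of the paper's one-sentence justification.
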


Let us remark that two second-order variables suffice to translate $\hyqptlplus$ into $\sohyltl$.
Together with the converse translation presented in Lemma~\ref{lemma_sohyltl2qptlplus}, we conclude that every $\sohyltl$ sentence is equivalent to one with only two second-order variables.

\section{Conclusion}

We settled the exact complexity of the most important verification problems for $\hyqptl$ and $\hyqptlplus$.
For $\hyqptl$, we proved that satisfiability is $\Sigma^2_1$-complete while for $\hyqptlplus$, we proved that satisfiability, finite-state satisfiability, and model-checking are equivalent to truth in third-order arithmetic.
The latter results were obtained by showing that $\hyqptlplus$ and second-order $\hyltl$ have the same expressiveness.

\textbf{Acknowledgments.} Gaëtan Regaud was supported by the European Union.
Martin Zimmermann was supported by DIREC – Digital Research Centre Denmark. We thank the reviewers for their detailed comments, which improved the article considerably.

\bibliographystyle{plain}
\bibliography{bib}

\end{document}